\newtheorem{thm}{Theorem}
\newtheorem{lem}[thm]{Lemma}
\newtheorem{prop}{Proposition}
\DeclareMathOperator{\Tr}{tr}
\DeclareMathOperator{\Diag}{diag}
\begin{document}

\title{Cost-Effective Activity Control of Asymptomatic Carriers in Layered Temporal Social Networks}

\author{Masoumeh~Moradian,
        Aresh~Dadlani,~\IEEEmembership{Senior~Member,~IEEE,}
        Rasul~Kairgeldin,
        and~Ahmad~Khonsari
\thanks{M. Moradian is with the School of Computer Engineering, K. N. Toosi~University of Technology, Tehran, Iran, and also with the School of Computer Science, Institute for Research in Fundamental Sciences (IPM), Tehran, Iran (e-mail: mmoradian@kntu.ac.ir).}
\thanks{A. Khonsari is with the School of Electrical and Computer Engineering, University of Tehran, and also with the School of Computer Science,~Institute for Research in Fundamental Sciences (IPM), Tehran, Iran (e-mail: a\_khonsari@ut.ac.ir).}
\thanks{A. Dadlani is with the Department of Computing Science, University of Alberta, Edmonton T6G 2E8, Canada, and also with the Department of Electrical and Computer Engineering, Nazarbayev University, Astana 010000, Kazakhstan. (e-mail: aresh@ualberta.ca).}
\thanks{R. Kairgeldin is with the Department of Electrical Engineering and Computer Science, University of California, Merced, CA 95343, USA. (e-mail: rkairgeldin@ucmerced.edu).}
}

\markboth{Journal of \LaTeX\ Class Files,~Vol.~14, No.~8, August~2021}%
{Shell \MakeLowercase{\textit{et al.}}: A Sample Article Using IEEEtran.cls for IEEE Journals}


\maketitle

\begin{abstract}
\fontdimen2\font=0.54ex
The robustness of human social networks against~epidemic propagation relies on the propensity for physical contact adaptation. During the early phase of infection, asymptomatic~carriers exhibit the same activity level as susceptible individuals, which presents challenges for incorporating control measures in~epidemic projection models. This paper focuses on modeling~and cost-efficient activity control of susceptible and carrier individuals in the context of the susceptible-carrier-infected-removed (SCIR) epidemic model over a two-layer contact network. In~this model, individuals switch from a static contact layer to create~new links~in a temporal layer based on state-dependent activation rates. We~derive conditions~for the infection to die out or persist in a homogeneous network. Considering the significant costs associated with reducing the activity of susceptible and carrier individuals, we formulate~an~optimization problem to minimize the disease decay rate while constrained~by a limited budget.~We propose the use of successive geometric~programming (SGP) approximation for this optimization task. Through simulation experiments on Poisson random graphs, we assess the impact of different parameters on disease prevalence. The results demonstrate that our SGP framework achieves a cost reduction of nearly $33\%$ compared to conventional methods based on degree and closeness centrality. 
\end{abstract}

\begin{IEEEkeywords}
Contact adaptation, asymptomatic carrier, epidemics, multi-layer temporal social networks, activity control.
\end{IEEEkeywords}
\vspace{-0.4em}

\section{Introduction}
\label{sec1}
\fontdimen2\font=0.56ex
\IEEEPARstart{M}{odeling} and analyzing contagion processes over network models exhibiting real-world characteristics has garnered much attention in recent years. Undoubtedly, the global outbreak of COVID-19 has further emphasized the need for~refined epidemic models to assess the effectiveness of preventive policies like quarantine, social~distancing, mask-wearing, and vaccination with greater accuracy \cite{Mishra2021, Hota2021}. Specifically, epidemic models that capture the behavior of asymptomatic but infectious individuals have gained prominence. These individuals, who can transmit the disease without showing symptoms, are indistinguishable from healthy individuals and contribute~to higher disease prevalence, particularly in the early stages of~the epidemic. A notable example is the devastating COVID-19~pandemic, wherein it has been reported that a significant $\SI{23}{\percent}$ of disease transmissions originated from asymptomatic infections~\cite{Liu2020}.

Epidemic models, including those with asymptomatic infectious states, have been widely studied on static graphs to~determine the \emph{epidemic threshold} for disease prevalence \cite{Dadlani2020}. However, human contact networks are inherently dynamic, and the timescale of infection propagation and link evolution is typically comparable. To~address this, various approaches have been proposed to integrate network temporality with disease dynamics \cite{Perra2012,  Valdano2015, Pare2018, Ogura2019, Temirlan2023}. Among these, activity-driven networks (ADNs) \cite{Perra2012} have emerged as an analytically tractable framework for studying~dynamic processes co-evolving over networks by capturing the temporal attributes of individuals. In ADNs, temporal links form when individuals become active and are removed when they are inactive. While social networks naturally exhibit temporal characteristics, the spread of disease and implementation of preventive measures~can alter this attribute \cite{Faryad2019, Zino2020}. Therefore, it is essential to model the temporal aspect as a function of the disease state. For instance, individuals displaying symptoms may limit their routine activities, and government-regulated policies can restrict the activities of healthy or undiagnosed asymptomatic individuals to prevent early-stage infection spread.

In modeling spreading processes over networks, it is crucial to consider both the temporal aspect and the contextual definition of link connections. These connections play a vital role in disease propagation, especially when implementing preventive~policies. This can be exemplified by the intrinsic behavior of infected individuals who tend to limit their social activities while maintaining connections with close family members and friends. Multi-layer network structures provide effective ways to represent different types of links, as each layer in these networks corresponds to a unique content and exhibits a distinct connectivity pattern~\cite{Faryad2013, Kivela2014, Aresh2017}. Temporality can also be incorporated in multi-layered networks by modeling the desired number of layers with ADNs.~In particular, nodes form their connections in ADN-based layers when active and dissolve these connections when inactive.

This paper analyzes the impact of \emph{asymptomatic carriers},~individuals without symptoms but infectious, on the epidemic threshold of \emph{multi-layered temporal networks}, and evaluates~the effectiveness of \emph{activity-reduction control measures} in such~networks. Our work is motivated by the fact that the course of~several epidemics may be expedited by the uncertain proportion of~carrier individuals who do not exhibit symptoms and thus, appear healthy. To study this behavior, we consider the stochastic \emph{susceptible-carrier-infected-removed} (SCIR) epidemic model \cite{Gemmetto2014} over a two-layer temporal network proposed in \cite{Aram2020}, with~the first and second layers modeling the persistent and temporal activity-driven links, respectively. In this model, the activity~pattern of individuals (or nodes) in the second layer depends on their disease state, with nodes behaving similarly in the susceptible and carrier states. Moreover, susceptible individuals become carriers upon contracting the disease and may recover with or without symptoms. The model also enables the examination of control-reduction policies on the social activities of susceptible and carrier individuals, such as lockdowns and reduced working hours, particularly when vaccines are not readily available in the early stages of the infection. Imposing such restrictions~entails substantial economic costs, creating a balance between~controlling the disease spread and the incurred costs. The main contributions of this work can be summarized as follows:
\begin{itemize}
    \item We introduce a mean-field (MF) approximated SCIR model for a two-layer temporal network. We derive the~epidemic threshold for the homogeneous activity network, where the activity rates of nodes depend on~the epidemic state and remain consistent for susceptible and carrier nodes. 
    \item For the non-homogeneous case, we optimize the activity rates of susceptible and carrier individuals to minimize~the disease exponential decay rate while adhering to a given budgetary constraint. Since the susceptible and carrier nodes act alike, the proposed optimization problem admits to a non-convex form and is solved using \emph{successive~geometric programming} (SGP) approximation.
    \item We analyze the impact of varying infection and temporality parameters on infection virality and the effectiveness~of activity control policies. Our numerical results demonstrate the superiority of SGP over the conventional degree and closeness centrality approaches, particularly when the temporal layer is more dense. 
\end{itemize}

The rest of the paper is organized as follows. Section~\ref{sec:literature}~provides a literature review, followed by the proposed coupled~epidemic and temporal network model presented in Section~\ref{sec:system}. Section~\ref{sec:threshold} details the epidemic threshold derivation of the~asymptomatic activity rate in a homogeneous network for both, the original and extended SCIR models. Section~\ref{sec:control} formulates the~cost-aware SGP problem for maximizing the disease decay~rate. Numerical experiment results are discussed in Section~\ref{sec:numerical}.~Finally, Section~\ref{sec:conclusion} concludes the paper.\vspace{-0.4em}

\section{Literature Review}
\label{sec:literature}
\fontdimen2\font=0.50ex
Several related studies have focused on analyzing the~epidemic threshold in temporal networks with different layering combinations. In \cite{Liu2018}, the authors examine the epidemic threshold of the classical SIS model in a network comprising of~two activity-driven layers and partial coupling between nodes. Spectral analysis of the SIS model on a composite~network consisting of static and temporal layers is reported in \cite{Nadini2020}. In~\cite{Aram2020}, the authors derive the epidemic outbreak threshold for a two-layer SIS network model with static and temporal layers while investigating the spread of sexually transmitted diseases (STDs). In their approach, temporal links are formed exclusively between activated nodes and their potential active neighbors, instead of random selection as in ADNs. The work is further~extended by introducing additional epidemic compartments to account for individual protective preferences and the treatment of infected individuals. In \cite{Zhang2020}, the SIS epidemic threshold is derived in temporal~networks including periodic and non-periodic Markovian networks.

The impact of individual awareness on system parameters has been well-investigated in the literature.~In \cite{Hu2018}, the susceptible-alerted-infected-susceptible (SAIS) epidemic model  is studied for a single-layer ADN, where alerted susceptible nodes affect the epidemic threshold. The~authors of \cite{Temirlan2023} propose a novel energy model that relates the SAIS model with the structural balance~of signed networks. By separating the diffusion of awareness from infection spread over distinct layers, the authors of \cite{Wang2021} analyze the coupled dynamics involving self-initiated awareness over multiplex networks. Additionally, awareness has also shown to induce changes in the topological features of networks, as in \cite{Li2021}, where the tendency of aware individuals to participate in specific layers of the physical contact network is explored.

Leveraging effectual control mechanisms in shaping~epidemic outbreaks has also received much attention in recent~years. In \cite{Zino2020}, the authors derive the epidemic threshold for the SAIS model on ADNs, considering awareness campaigns and confinement as control actions. The efficacy of active and inactive~quarantine strategies on epidemic containment for the SIS and the SIR processes on ADNs are extensively analyzed in~\cite{Mancastroppa2020}. The authors of \cite{Ogura2019} obtain an upper bound for the decay rate of the infected population and optimize the activity and acceptance rates of infected nodes, subject to cost and performance constraints. The model from \cite{Ogura2019} is further generalized in \cite{Hota2021b}, where infected individuals are divided into asymptomatic and symptomatic groups, and asymptomatic individuals exhibit activity and acceptance rates similar to susceptible individuals. In \cite{Chen2014}, the authors attempt to concurrently control the spread of multiple processes~by optimally allocating resources to different layers of the directed and weighted layered contact network.


The impact of asymptomatic carriers and their behavioral~similarity to healthy susceptible individuals in temporal networks has been explored in \cite{hota2021impacts} and \cite{9683739}.~Utilizing a one-layer~ADN, these studies assume decentralized activation decisions by nodes based on prevalence information and employ a game-theoretic~approach to model node behavior.~In contrast, our work introduces a novel contact adaptation model within a two-layer network comprising static and temporal layers.~Here, symptomatic individuals voluntarily reduce activities upon infection, while~the activities of carriers and susceptible individuals are subject to~optimal control through external restriction policies, constrained by a~limited budget.~These policies are implemented through measures such as lockdowns, quarantine, and adjusted working hours, resulting in the reduction of activities among individuals.~To the best of our knowledge, this work is the first to model and contain the activity of asymptomatic carriers by employing a temporal model of layered networks and external control with limited budget.\vspace{-0.3em}

\section{Networked Epidemic Model}
\label{sec:system}
\fontdimen2\font=0.50ex
In this section, we first introduce the network~structure for our analysis and then formulate the infection spreading model.\vspace{-0.6em}

\subsection{Temporal Layered Topology}
\label{sub:topology}
\fontdimen2\font=0.50ex
We consider a two-layer network structure, denoted by the ordered tuple $\mathcal{G} \!=\! (V,E_1,E_2)$, where the undirected graphs $G_1 \!=\! (V, E_1)$ and $G_2 \!=\! (V,E_2)$ represent the connections~between the $|V| \!=\! N$ nodes in the first and second layers,~respectively. In each layer $l \!\in\! \{1,2\}$, the link between any two arbitrary nodes $i,j \!\in\! V$ is denoted as $(i,j) \!\in\! E_l$. We use $\mathbf{A}$ and $\mathbf{B}$ to symbolize the adjacency matrices of the graphs $G_1$ and $G_2$, respectively, where the entry $(i,j)$ of $\mathbf{A}$ ($\mathbf{B}$),~denoted by $a_{i,j}$ ($b_{i,j}$), is equal to one if there exists a link between~nodes $i$ and $j$, and zero otherwise. Note that fully connected graphs in both layers are achieved only when $a_{i,j} \!=\! b_{i,j} \!=\! 1$ holds for all $i\neq j$. Nodes can frequently change their activity states from \emph{inactive} to \emph{active} and vice versa. In either state, the nodes maintain connections with their neighbors in the first layer.~In the second layer, links are potential links and are generated with a given probability only when both end nodes are in an \emph{active} state. More specifically, the link $(i,j) \!\in\! E_2$ is created with probability $p_{i,j}$ if nodes $i$ and $j$ are both active. This temporal link persists until either node $i$ or node $j$ becomes inactive. Thus, the links in $G_1$ remain fixed regardless of~the activity states of the nodes, whereas $G_2$ represents the links~that are probabilistically created when both end nodes are active.\vspace{-0.9em}

\subsection{Epidemic Model Description}
\label{sub:epidemic}
\fontdimen2\font=0.50ex
We adopt the paradigmatic SCIR compartmental model \cite{Gemmetto2014}, in which each individual $i \in V$ is in one of the following four epidemic states at any given time: \emph{susceptible} to the disease ($\textrm{S}$), an \emph{asymptomatic carrier} ($\textrm{C}$), \emph{infected} by the disease ($\textrm{I}$), or \emph{removed} ($\textrm{R}$) after recovery. That is, upon contracting the~disease, a susceptible person either becomes a carrier of the disease (i.e., infectious without showing signs of any symptoms) or transitions directly to the infected state, displaying visible symptoms. The carrier, in turn, either becomes infectious before recovering from the illness or recovers directly without exhibiting any symptoms.~A recovered individual does not return to the susceptible state again.

Extending the SCIR model to both layers of the network $\mathcal{G}$ results in eight possible epidemic states,~namely \emph{susceptible-inactive} ($\textrm{S}^1$), \emph{carrier-inactive} ($\textrm{C}^1$), \emph{infected-inactive} ($\textrm{I}^1$), \emph{removed-inactive} ($\textrm{R}^1$), \emph{susceptible-active} ($\textrm{S}^2$), \emph{carrier-active} ($\textrm{C}^2$), \emph{infected-active} ($\textrm{I}^2$), and \emph{removed-active}($\textrm{R}^2$). All processes, including the duration of active, inactive, carrier, and infected states, as well as the disease transmission process are assumed to be exponentially distributed. As a result of this, the network state at time $t$ can be modeled as an $N$-tuple continuous-time Markov chain (CTMC), expressed as $\textbf{X}(t)= \{(X_i(t)); t \geq 0\}$, where $i \!=\! 1,2,\ldots, N$. For $l \in \{1,2\}$, $X_i(t) \in \{\textrm{S}^l,\textrm{C}^l,\textrm{I}^l,\textrm{R}^l\}$, denotes the state of node $i$ at time $t$.
\begin{figure}[!t]
	\centering
	\includegraphics[width=0.9\columnwidth]{./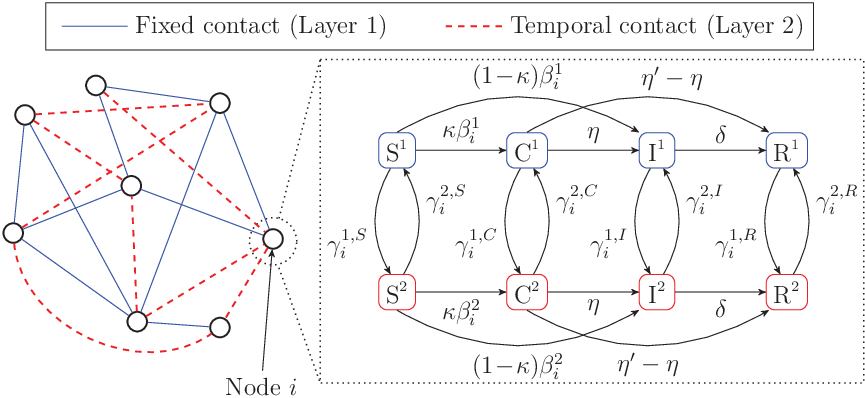}
	\vspace{-0.8em}
 	\caption{The state transitions for the SCIR model over a two-layer network.}
	\label{fig_1}
 \vspace{-0.3em}
\end{figure}

\subsubsection{Transition Rates of the CTMC Model}
\label{sec:model}
\fontdimen2\font=0.50ex
We now introduce the possible state transitions for node $i$~as illustrated in \figurename~\ref{fig_1}. The sojourn times of inactive and active states depend on the epidemic status of the node. Specifically, node $i$ remains inactive (active) for an exponential random time with a rate of $\gamma_i^{1,\textrm{X}} \!>\! 0$ ($\gamma_i^{2,\textrm{X}} \!>\! 0$), where $\textrm{X} \!\in\! \{\textrm{S}, \textrm{C}, \textrm{I}, \textrm{R}\}$ denotes the epidemic state of node $i$. Consequently, the~transition rates between states $\textrm{X}^1$ and $\textrm{X}^2$,
for a chosen time-step $\Delta t$, can be written as \cite{Diekmann2009}:\vspace{-0.2em}
\begin{equation}
   \begin{cases}
    \!\Pr\!\left(X_i(t+\Delta t) =\textrm{X}^2|X_i(t)=\textrm{X}^1,\textbf{X}(t)\right)=\gamma_i^{1,\textrm{X}}  \Delta t \!+\! o(\Delta t), \vspace{0.1em}\\
    \!\Pr\!\left(X_i(t+\Delta t)=\textrm{X}^1|X_i(t)=\textrm{X}^2,\textbf{X}(t)\right)=\gamma_i^{2,\textrm{X}}  \Delta t \!+\! o(\Delta t).
  \end{cases}
  \label{eq0}
  \vspace{-0.2em}
\end{equation}

We define $\beta_{\,\textrm{C}} > 0$ and $\beta_{\,\textrm{I}} > 0$ to be the infection~transmission rates from a carrier and an infected neighbor, respectively. Thus, the rate at which a susceptible-inactive node ($\textrm{S}^1$) contracts the disease is given by:\vspace{-0.2em}
\begin{equation}
\beta_i^1(t) \triangleq \beta_{\,\textrm{C}} Z_i(t) + \beta_{\,\textrm{I}} Y_i(t),
\label{eq:beta1}
\vspace{-0.2em}
\end{equation}
where $Z_i(t)$ and $Y_i(t)$ are, respectively, the number of carrier and infected neighbors of node $i$ in the (fixed) first layer. As a result of this, we have:\vspace{-0.2em}
\begin{equation}
   \begin{cases}
    \!Z_i(t) = \sum_{k \in V} a_{i,k} \mathds{1}_{\{X_k(t) \in \{\textrm{C}^1,\textrm{C}^2\}\}},\vspace{0.2em} \\
    \!Y_i(t) = \sum_{k \in V} a_{i,k} \mathds{1}_{\{X_k(t) \in \{\textrm{I}^1,\textrm{I}^2\}\}},
  \end{cases}
  \label{eq:ZY}
  \vspace{-0.2em}
\end{equation}
where $\mathds{1}_{\{\cdot\}}$ denotes the identity operator.
Moreover, regardless~of being active or inactive, a susceptible node transitions to the~carrier and infected states with probabilities $\kappa$ and $\bar{\kappa} \!=\! 1 - \kappa$~upon contracting the infection, respectively. Therefore, we have:\vspace{-0.2em}
\begin{equation}
   \begin{cases}
    \!\Pr\!\left(X_i(t+\Delta t) \!=\! \textrm{C}^1|X_i(t) \!=\! \textrm{S}^1,\textbf{X}(t)\right) \!=\! \kappa \beta^1_i(t) \Delta t \!+\! o(\Delta t), \vspace{0.2em}\\
    \!\Pr\!\left(X_i(t+\Delta t) \!=\! \textrm{I}^1|X_i(t) \!=\! \textrm{S}^1,\textbf{X}(t)\right) \!=\! \bar{\kappa}\beta^1_i(t) \Delta t \!+\! o(\Delta t).
  \end{cases}
  \label{eq:rateSus}
\end{equation}

Furthermore, a susceptible-active node ($\textrm{S}^2$) becomes infectious with non-negative rate:\vspace{-0.2em}
\begin{equation}
 \beta_i^2(t) \triangleq \beta_i^1(t) + \beta_i^a(t),
\label{eq:beta2}
\end{equation}
where $\beta_i^1(t)$ (derived in \eqref{eq:beta1}) and $\beta_i^a(t)$ are the infection rates associated with the links of the first (static) and second (temporal) layers, respectively. Here, $\beta_i^a(t) \triangleq \beta_{\,\textrm{C}} Z_i^a(t) + \beta_{\,\textrm{I}} Y_i^a(t)$, where $Z_i^a(t)$ and $Y_i^a(t)$ are the number of active carrier and infected neighbors of node $i$ in the second layer that have an activated link with node $i$, respectively. Accordingly,\vspace{-0.2em}
\begin{equation}
   \begin{cases}
    \!Z_i^a(t) = \sum_{k \in V} b_{i,k}\,\zeta_{i,k} \mathds{1}_{\{X_k(t) = \textrm{C}^2\}},\vspace{0.2em} \\
    \!Y_i^a(t) \!=\! \sum_{k \in V} b_{i,k}\,\zeta_{i,k} \mathds{1}_{\{X_k(t) = \textrm{I}^2\}},
  \end{cases}
  \label{eq:ZYa}
\end{equation}
where $\zeta_{i,k}$ is a Bernoulli random variable that takes the value of one with probability $p_{i,k}$, given that nodes $i$ and $k$ are both active. Subsequently, the equations corresponding to a susceptible active node are obtained as follows:
\vspace{0em}
\begin{equation}
   \begin{cases}
    \!\Pr\!\left(X_i(t+\Delta t) \!=\! \textrm{C}^2|X_i(t) \!=\! \textrm{S}^2,\textbf{X}(t)\right) \!=\! \kappa \beta^2_i(t)  \Delta t \!+\! o(\Delta t), \vspace{0.2em}\\
    \!\Pr\!\left(X_i(t+\Delta t) \!=\! \textrm{I}^2|X_i(t) \!=\! \textrm{S}^2,\textbf{X}(t)\right) \!=\! \bar{\kappa}\beta^2_i(t) \Delta t \!+\! o(\Delta t).
  \end{cases}
  \label{eq:rateSusAct}
\end{equation}

Suppose that a carrier node, whether active or inactive, becomes infected at rate $\eta > 0$ and recovers at a rate of $\eta' - \eta > 0$. Furthermore, an infected node recovers at a rate of $\delta > 0$. Thus,\vspace{-0.2em}
\begin{equation}
   \begin{cases}
    \!\Pr\!\big(X_i(t \!+\! \Delta t) \!=\! \textrm{I}^l | X_i(t) \!=\! \textrm{C}^l, \textbf{X}(t)\big) \!=\! \eta \Delta t + o(\Delta t), \vspace{0.2em}\\
    \!\Pr\!\big(X_i(t \!+\! \Delta t) \!=\! \textrm{R}^{l} | X_i(t) \!=\! \textrm{C}^{l}, \textbf{X}(t)\big) \!=\! (\eta' \!-\! \eta )\Delta t \!+\! o(\Delta t), \vspace{0.2em}\\
    \!\Pr\!\big(X_i(t + \Delta t) \!=\! \textrm{R}^l | X_i(t) \!=\! \textrm{I}^l, \textbf{X}(t)\big) \!=\! \delta \Delta t + o(\Delta t).
  \end{cases}
  \label{eq:ratem}
\end{equation}
where $l \!\in\! \{1,2\}$. The networked Markov process introduced~has a state space of $8^N$ states. As such, deriving the probability~distribution of each node being in one of the epidemic states~over time for this Markov chain, especially for large networks, is~mathematically intractable due to the state space explosion problem.~To overcome this, we employ a first-order MF approximation.\vspace{-0.3em}

\subsubsection{Mean-field Approximation Model}
\label{sec:mfa}
\fontdimen2\font=0.50ex
To reduce the dimensionality of the proposed CTMC model, we employ the approximation method for MF differential equations~\cite{Hota2021,Faryad2013,Aresh2017}. For this purpose, we first define the following state probabilities:\vspace{-0.2em}
\begin{equation}
   \begin{cases}
    S_i^l(t) \triangleq \Pr\!\big(X_i(t) = \textrm{S}^l\big), \vspace{0.2em}\\
    C_i^l(t) \triangleq \Pr\!\big(X_i(t) = \textrm{C}^l\big), \vspace{0.2em}\\
    I_i^l(t) \triangleq \Pr\!\big(X_i(t) = \textrm{I}^l\big), \vspace{0.2em}\\
    R_i^l(t) \triangleq \Pr\!\big(X_i(t) = \textrm{R}^l\big),
  \end{cases}
  \label{eq:stateProb}
\end{equation}
where $l \!\in\! \{1, 2\}$ and $\sum_i\sum_l S_i^l(t)+C_i^l(t)+I_i^l(t)+R_i^l(t)=1$. On the other hand, from \eqref{eq:beta1} and \eqref{eq:ZY}, the expected value of $\beta^1_i(t)$ can be expressed as follows:\vspace{-0.2em}
\begin{equation}
\begin{aligned}
	\mathbb{E}\!\left[\beta^1_i(t)\right] \!=\!  &\sum_{k=1}^N \!a_{i,k} \Big(\beta_{\,\textrm{C}} \Pr\!\big(X_k(t) \in \{\textrm{C}^1,\textrm{C}^2\}|X_i(t) = \textrm{S}^1\big)  \\
\! &+ \beta_{\,\textrm{I}} \Pr\!\big(X_k(t) \in \{\textrm{I}^1,\textrm{I}^2\}|X_i(t) = \textrm{S}^1\big)\!\Big).
\label{eq:beta1e}
\end{aligned}
\end{equation}
Note that the condition $X_i(t) \!=\! \textrm{S}^1$ given in \eqref{eq:beta1e} arises directly from \eqref{eq:rateSus}, where $\beta^1_i(t)$ is the disease contraction rate of node $i$ given that it is susceptible-inactive. In the MF  model, the states of the nodes are assumed to be uncorrelated. Thus, we have $\Pr\!\big(X_k(t) \!\in\! \{\textrm{C}^1,\textrm{C}^2\}|X_i(t) \!=\! \textrm{S}^1\big) \!=\! \Pr\!\big(X_k(t) \!\in\! \{\textrm{C}^1,\textrm{C}^2\}\big)$ and $\Pr\!\big(X_k(t) \!\in\! \{\textrm{I}^1,\textrm{I}^2\}|X_i(t) \!=\! \textrm{S}^1\big) \!=\! \Pr\!\big(X_k(t) \!\in\! \{\textrm{I}^1,\textrm{I}^2\}\big)$. Using the state probabilities in \eqref{eq:stateProb},~we can now rewrite \eqref{eq:beta1e} as:\vspace{-0.2em}
\begin{equation}
	\mathbb{E}\!\left[\beta^1_i(t)\right] \!=\!  \sum_{k=1}^N \!a_{i,k} \Big(\!\beta_{\,\textrm{C}} \big(C^1_k(t) \!+\! C^2_k(t)\big) \!+\! \beta_{\,\textrm{I}} \big(I^1_k(t) \!+\! I^2_k(t)\big)\!\Big).
\label{eq:beta1e_new}
\end{equation}

Similarly, using \eqref{eq:beta2}, \eqref{eq:ZYa}, and \eqref{eq:beta1e}, the expected value of $\beta^2_i(t)$ is derived as follows, where $p_{i,j} = \mathbb{E}[\zeta_{i,j}]$:
\begin{equation}
	\begin{aligned}
		\mathbb{E}\!\left[\beta^2_i(t)\right] \!=\! & \sum_{k=1}^N \!a_{i,k} \Big(\!\beta_{\,\textrm{C}} \big(C^1_k(t) + C^2_k(t)\big) \!+\! \beta_{\,\textrm{I}} \big(I^1_k(t)+I^2_k(t)\big)\!\Big)\\
		&+ \sum_{k=1}^N b_{i,k}\, p_{i,k} \big(\beta_{\,\textrm{C}} C^2_k(t) \!+\! \beta_{\,\textrm{I}} I^2_k(t)\big).
	\end{aligned}
\label{eq:beta2e}
\end{equation}

Based on \eqref{eq:stateProb}, \eqref{eq:beta1e_new}, and \eqref{eq:beta2e}, the system of MF~equations for the original CTMC can be characterized as follows:\vspace{-0.2em}
\begin{equation}
    \begin{cases}
    \begin{aligned}
        {S_i^1}^{\prime\!}(t)\! & = -\Big(\!\gamma_i^{1,\textrm{S}\!} + \mathbb{E}\!\left[\beta^1_i(t)\right]\!\Big) S^1_i(t) \!+\! \gamma^{2,\textrm{S}}_{i}\, S_i^2(t), \\
        {S_i^2}^{\prime\!}(t)\! & = -\Big(\!\gamma_i^{2,\textrm{S}\!} + \mathbb{E}\!\left[\beta^2_i(t)\right]\!\Big) S^2_i(t) \!+\! \gamma^{1,\textrm{S}}_{i}\, S_i^1(t),\\
        {C_i^1}^{\prime\!}(t)\! & = -\Big(\!\gamma_i^{1,\textrm{C}\!} \!+\! \eta'\Big) C_i^1(t) \!+\! \gamma_i^{2,\textrm{C}\!}\,C_i^2(t) \!+\! \kappa\, \mathbb{E}\!\left[\beta^1_i(t)\right]\!S_i^1(t), \\
        {C_i^2}^{\prime\!}(t)\! & = -\Big(\!\gamma_i^{2,\textrm{C}\!} \!+\! \eta'\Big) C_i^2(t) \!+\! \gamma_i^{1,\textrm{C}\!}\,C_i^1(t) \!+\! \kappa\, \mathbb{E}\!\left[\beta^2_i(t)\right]\!S_i^2(t), \\
        {I_i^1}^{\prime\!}(t)\! & = -\Big(\!\gamma_i^{1,\textrm{I}\!} \!+\! \delta\Big) I_i^1(t) \!+\! \gamma_i^{2,\textrm{I}}\,I_i^2(t) + \eta\,C_i^1(t) \\
        &\quad\, + \bar{\kappa}\, \mathbb{E}\!\left[\beta^1_i(t)\right] S_i^1(t),\\
        {I_i^2}^{\prime\!}(t)\! & = -\Big(\!\gamma_i^{2,\textrm{I}\!} \!+\! \delta\Big) I_i^2(t) \!+\! \gamma_i^{1,\textrm{I}}\,I_i^1(t) + \eta\,C_i^2(t) \\
        &\quad\, + \bar{\kappa}\, \mathbb{E}\!\left[\beta^2_i(t)\right] S_i^2(t), \\
        {R_i^1}^{\prime\!}(t)\! & = -\gamma_i^{1,\textrm{R}\!} R_i^1(t) \!+\! \gamma_i^{2,\textrm{R}\!} R_i^2(t) \!+\! \delta\,I_i^1(t) \!+\! (\eta' \!-\! \eta)\,C_i^1(t), \\
        {R_i^2}^{\prime\!}(t)\! & = -\gamma_i^{2,\textrm{R}\!} R_i^2(t) \!+\! \gamma_i^{1,\textrm{R}\!} R_i^1(t) \!+\! \delta\,I_i^2(t) \!+\! (\eta' \!-\! \eta)\,C_i^2(t).
    \end{aligned}
    \end{cases}
    \label{eq:mfCTMC}
\end{equation}

The MF epidemic model in \eqref{eq:mfCTMC} has a reduced~dimension~of $8N$ (more details in \cite{Taylor2011}). Note that the carrier nodes are unaware of their infectiousness and thus, they maintain the same activity level as their susceptible peers, i.e., $\forall i \!\in\! V, \gamma_i^{1,\textrm{C}} \!=\! \gamma_i^{1,\textrm{S}}$.~Assuming the absence of a vaccine,~we further assume that to curb the prevalence of the disease, limitations are enforced on the activity rates of all susceptible and carrier individuals as they are indistinguishable. While limiting the activity level of these individuals inflicts economic costs on society, relaxing the restrictions contributes to the spread of the disease by~the~asymptomatic carriers.\vspace{-0.3em}

\section{Epidemic Threshold Analysis}
\label{sec:threshold}
\fontdimen2\font=0.50ex
The focus of this section is on large homogeneous networks, where the activity and inactivity rates of all nodes in epidemic state $\textrm{X} \!\in\! \{\textrm{S}, \textrm{C}, \textrm{I}, \textrm{R}\}$ are homogeneous. More precisely, $\forall i \!\in\! V, \gamma_i^{l,\textrm{X}} \!=\! \gamma^{l,\textrm{X}}$, where $l \!\in\! \{1, 2\}$. In addition, we model $G_1$ and $G_2$ as random regular graphs with degrees $d_1$ and $d_2$, respectively. Also, the probability of constructing a temporal link in the second layer is taken to be $p$ for all nodes. Finally, since susceptible and carrier nodes have the same activity pattern, we set $\gamma^{1,\textrm{S}} \!=\! \gamma^{1,\textrm{C}} \!=\! \gamma^1$ and $\gamma^{2,\textrm{S}} \!=\! \gamma^{2,\textrm{C}} \!=\! \gamma^2$. Due to network homogeneity, the probabilities associated with different epidemic states will be the same for all nodes. By dropping the subscript $i$ and applying the above~assumptions, the model in \eqref{eq:mfCTMC} can be formulated as:\vspace{-0.2em}
\begin{subnumcases}{\label{eq:MFhomo}}
	\!{S^1}^{\prime\!}(t)\!=\!&$\!\!\!\!-\Big(\gamma^1 + \mathbb{E}\!\left[\beta^1(t)\right]\!\Big)\,S^1(t) \!+\! \gamma^2\, S^2(t),$ \label{eq:MFHomoS1} \\
	\!{S^2}^{\prime\!}(t)\!=\!&$\!\!\!\!-\Big(\gamma^2 + \mathbb{E}\!\left[\beta^2(t)\right]\!\Big)\,S^2(t) \!+\! \gamma^1\, S^1(t),$ \label{eq:MFHomoS2} \\
	\!{C^1}^{\prime\!}(t)\!=\!&$\!\!\!\!-\Big(\!\gamma^{1\!} \!+\! \eta'\!\Big) C^{1\!}(t) \!+\! \gamma^2 C^2(t) \!+\! \kappa\, \mathbb{E}\!\left[\beta^{1\!}(t)\right]\! S^{1\!}(t),$ \label{eq:MFHomoC1} \\	
	\!{C^2}^{\prime\!}(t)\!=\!&$\!\!\!\!-\Big(\!\gamma^{2\!} \!+\! \eta'\!\Big) C^{2\!}(t) \!+\! \gamma^1 C^1(t) \!+\! \kappa\, \mathbb{E}\!\left[\beta^{2\!}(t)\right]\! S^{2\!}(t),$ \label{eq:MFHomoC2} \\
	\!{I^1}^{\prime\!}(t)\!=\!&$\!\!\!\!-\Big(\!\gamma^{1\!} + \delta\Big) I^{1}(t) + \gamma^{2} I^2(t) + \eta\, C^1(t)$ \nonumber \vspace{-0.1em}\\
	&$\!\!\!+ \bar{\kappa}\, \mathbb{E}\!\left[\beta^1(t)\right] S^1(t),$ \label{eq:MFHomoI1} \\
	\!{I^2}^{\prime\!}(t)\!=\!&$\!\!\!\!-\Big(\!\gamma^{2\!} + \delta\Big) I^2(t) + \gamma^{1} I^1(t) + \eta\, C^2(t)$ \nonumber \vspace{-0.1em}\\
	&$\!\!\!+ \bar{\kappa}\, \mathbb{E}\!\left[\beta^2(t)\right] S^2(t),$ \label{eq:MFHomoI2} \\
	\!{R^1}^{\prime\!}(t)\!=\!&$\!\!\!\!-\gamma^{1\!} R^{1\!}(t) \!+\! \gamma^{2} R^2(t) \!+\! \delta I^{1\!}(t) \!+\! (\eta' \!-\! \eta) C^{1\!}(t),$ \label{eq:MFHomoR1}\\
	\!{R^2}^{\prime\!}(t)\!=\!&$\!\!\!\!-\gamma^{2\!} R^{2}(t) \!+\! \gamma^{1\!} R^1(t) \!+\! \delta I^{2\!}(t) \!+\! (\eta' \!-\! \eta) C^2(t).$ \label{eq:MFHomoR2}
 \vspace{-0.2em}
\end{subnumcases}
where $\mathbb{E}[\beta^1(t)]$ and $\mathbb{E}[\beta^2(t)]$, derived respectively in \eqref{eq:beta1e_new} and \eqref{eq:beta2e}, reduce to the following for the homogeneous case:\vspace{-0.2em}
\begin{subnumcases}{\label{eq:15}}
	\!\!\mathbb{E}\!\left[\beta^{1\!}(t)\right] \!=\! d_{1\!} \Big(\!\beta_{\,\textrm{C}} \big(C^{1\!}(t) \!+\! C^2(t)\big) \!+\! \beta_{\,\textrm{I}} \big(I^{1\!}(t) \!+\! I^2(t)\big)\!\Big), \label{eq:beta1e_homo} \\
	\!\!\mathbb{E}\!\left[\beta^{2\!}(t)\right] \!=\! d_{1\!} \Big(\!\beta_{\,\textrm{C}} \big(C^1(t) \!+\! C^2(t)\big) \!+\! \beta_{\,\textrm{I}} \big(I^1(t) + I^2(t)\big)\!\Big) \nonumber \\
    \qquad\qquad\,\, + d_2\, p\, \big(\beta_{\,\textrm{C}} C^2(t) + \beta_{\,\textrm{I}} I^2(t)\big). \label{eq:beta2e_homo}
\end{subnumcases}
\vspace{-0.8em}

\subsection{Original SCIR Model}
\label{sec:org_scir}
\fontdimen2\font=0.50ex
For the case when $\kappa \!=\! 1$, susceptible nodes that come in~contact with carrier or infected neighbors become asymptomatic before progressing to the infected epidemic state. We refer to this scenario as the \emph{original SCIR model}. By setting $\kappa \!=\! 1$~in \eqref{eq:MFhomo},~the dynamical system can be analyzed in terms of the \emph{basic~reproduction number} $\mathcal{R}_0$, which refers to the average number of secondary infections caused by~a~single carrier introduced into a susceptible network, and essentially determines whether the disease prevails ($\mathcal{R}_0 \!>\! 1$) or eventually dies out ($\mathcal{R}_0 \!<\! 1$) in the network~\cite{Dadlani2020}. 

To investigate the system stability, we now analyze the steady-state behavior of the non-linear model given in \eqref{eq:MFhomo}, regardless of the initial fraction of infected nodes. The trivial \emph{disease-free equilibrium} (DFE) of \eqref{eq:MFhomo}, denoted by $\mathcal{E}_0$, is:
\begin{align}
	\mathcal{E}_0 &= \left( S^1, S^2, C^1, C^2, I^1, I^2, R^1, R^2 \right) \nonumber \\
	    		  &= \left( \frac{\gamma^{2}}{\gamma^{1} \!+\! \gamma^{2}}, \frac{\gamma^{1}}{\gamma^{1} \!+\! \gamma^{2}},0,0,0,0,0,0 \right).
\end{align}
The equilibrium $\mathcal{E}_0$ is locally stable if and only if $\mathcal{R}_0 \!<\! 1$.~By undertaking the \emph{next generation matrix} (NGM) approach \cite{Diekmann2009}, the basic reproduction number can be obtained as $\mathcal{R}_0 \!=\! \rho(\mathbf{F \cdot V}^{-1})$, where $\rho(\cdot)$ denotes the spectral radius.~Moreover, the order of the square matrices $\mathbf{F}$ and $\mathbf{V}$ is $m$, which essentially signifies the number of disease-infected compartments. In our model, $m \!=\! 4$ as there are four infection-related compartments, namely $\textrm{C}^1, \textrm{C}^2, \textrm{I}^1,$ and $\textrm{I}^2$. Likewise,~the $(i,j)$-th entry of $\mathbf{F}$ is derived as $\mathbf{F}_{i,j} \!=\! \partial \mathcal{F}_i/\partial x_j$ evaluated at~$\mathcal{E}_0$, where $\mathcal{F}_i$ is the rate of appearance of new infections in compartment $i$ and $x_j$ denotes the probability of being in compartment $j$. Here, we let $x_1 \!=\! C^1$, $x_2 \!=\! C^2$, $x_3 \!=\! I^1$, and $x_4 \!=\! I^2$. In our model, susceptible nodes transition to either carrier or infected states when new infections occur. Thus, the values of $\mathcal{F}_1$, $\mathcal{F}_2$, $\mathcal{F}_3$, and $\mathcal{F}_4$ are the same as the last~terms in \eqref{eq:MFHomoC1}, \eqref{eq:MFHomoC2}, \eqref{eq:MFHomoI1}, and \eqref{eq:MFHomoI2}, respectively. However, in the original SCIR model, $\mathcal{F}_3 = \mathcal{F}_4 = 0$ since $\bar{\kappa}=0$.~Finally, the $(i,j)$-th entry of $\mathbf{V}$ is derived as $\mathbf{V}_{i,j} \!=\! \partial \mathcal{V}_i/\partial x_j$ evaluated at $\mathcal{E}_0$, where $\mathcal{V}_i \!=\! \mathcal{V}_i^- \!+\! \mathcal{V}_i^+$, and $\mathcal{V}_i^-$ and $\mathcal{V}_i^+$~denote, respectively, the outgoing and incoming rates into compartment $i$ through all means other than new infections. For instance, $\mathcal{V}_1^- \!=\! (\gamma^{1,\textrm{C}} \!+\! \eta') C^1$ and $\mathcal{V}_1^+ \!=\! \gamma^{2,\textrm{C}} C^2$. Accounting for all the above definitions, we postulate the following lemma.
\begin{lem}
	\label{lemma1}
	For the original SCIR model, the basic~reproduction number is $\mathcal{R}_0 \!=\! \rho\left(\mathbf{L}\right) \!=\! \rho\big(\mathbf{F_1}\cdot (\beta_{\,\textrm{C}} \mathbf{I} + \eta\, \beta_{\,\textrm{I}} \mathbf{V}_2^{-1})\cdot \mathbf{V}_1^{-1}\big)$, where:\vspace{-0.2em}
	\begin{align}
		\mathbf{F}_1 &=
					\begin{bmatrix}
						d_1\, S^1 & d_1\, S^1\\
						d_1\, S^2 & (p\,d_2 + d_1)\, S^2
					\end{bmatrix},  \nonumber \\				
		\mathbf{V}_1 &=
					\begin{bmatrix}
						\gamma^{1} + \eta' & -\gamma^{2}\\
						-\gamma^{1} & \gamma^{2} + \eta'
					\end{bmatrix},~  \text{and} \nonumber \\
		\mathbf{V}_2 &=
					\begin{bmatrix}
						\gamma^{1,\textrm{I}} + \delta & -\gamma^{2,\textrm{I}}\\
						-\gamma^{1,\textrm{I}} & \gamma^{2,\textrm{I}} + \delta
					\end{bmatrix}.  \nonumber
	\end{align}
\end{lem}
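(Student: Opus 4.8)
The plan is to execute the next-generation matrix (NGM) construction of \cite{Diekmann2009} on the homogeneous system \eqref{eq:MFhomo} linearized at the disease-free equilibrium $\mathcal{E}_0$, with the four infected compartments ordered as $x_1=C^1$, $x_2=C^2$, $x_3=I^1$, $x_4=I^2$, and then to read off $\mathcal{R}_0=\rho(\mathbf{F}\mathbf{V}^{-1})$ after exploiting the block structure that emerges once $\kappa=1$.

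First I would assemble the new-infection vector $\mathcal{F}=(\mathcal{F}_1,\ldots,\mathcal{F}_4)$. Setting $\kappa=1$, hence $\bar{\kappa}=0$, in \eqref{eq:MFHomoC1}--\eqref{eq:MFHomoI2}, the only new-infection terms are $\mathcal{F}_1=\mathbb{E}[\beta^1]\,S^1$ and $\mathcal{F}_2=\mathbb{E}[\beta^2]\,S^2$, while $\mathcal{F}_3=\mathcal{F}_4=0$. Differentiating with respect to each $x_j$ through the homogeneous expansions \eqref{eq:beta1e_homo}--\eqref{eq:beta2e_homo} and evaluating at $\mathcal{E}_0$ (so $S^1=\gamma^{2}/(\gamma^{1}+\gamma^{2})$, $S^2=\gamma^{1}/(\gamma^{1}+\gamma^{2})$, and all infected coordinates vanish) gives $\mathbf{F}=\left[\begin{smallmatrix}\beta_{\,\textrm{C}}\mathbf{F}_1 & \beta_{\,\textrm{I}}\mathbf{F}_1\\[2pt] \mathbf{0}&\mathbf{0}\end{smallmatrix}\right]$ with $\mathbf{F}_1$ exactly as in the statement; the two copies of $\mathbf{F}_1$ reflect that the force of infection is the same linear functional of the carrier and infected populations up to the scalars $\beta_{\,\textrm{C}}$, $\beta_{\,\textrm{I}}$, and the zero bottom rows come from $\mathcal{F}_3=\mathcal{F}_4=0$.

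Next I would collect the remaining (non-infection) linear terms of \eqref{eq:MFHomoC1}--\eqref{eq:MFHomoI2} into the transfer vector $\mathcal{V}$, using $\gamma^{1,\textrm{C}}=\gamma^{1}$ and $\gamma^{2,\textrm{C}}=\gamma^{2}$. The resulting Jacobian is lower block-triangular, $\mathbf{V}=\left[\begin{smallmatrix}\mathbf{V}_1&\mathbf{0}\\[2pt] -\eta\mathbf{I}&\mathbf{V}_2\end{smallmatrix}\right]$, with $\mathbf{V}_1$, $\mathbf{V}_2$ as stated and the $-\eta\mathbf{I}$ block carrying the carrier-to-infected progression. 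Both diagonal blocks are nonsingular since $\det\mathbf{V}_1=\eta'(\eta'+\gamma^{1}+\gamma^{2})>0$ and $\det\mathbf{V}_2=\delta(\delta+\gamma^{1,\textrm{I}}+\gamma^{2,\textrm{I}})>0$, so the block inverse is $\mathbf{V}^{-1}=\left[\begin{smallmatrix}\mathbf{V}_1^{-1}&\mathbf{0}\\[2pt] \eta\,\mathbf{V}_2^{-1}\mathbf{V}_1^{-1}&\mathbf{V}_2^{-1}\end{smallmatrix}\right]$, which I would confirm by direct multiplication.

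Finally, multiplying the blocks, $\mathbf{F}\mathbf{V}^{-1}$ is upper block-triangular with zero $(2,2)$ block and $(1,1)$ block $\beta_{\,\textrm{C}}\mathbf{F}_1\mathbf{V}_1^{-1}+\eta\beta_{\,\textrm{I}}\mathbf{F}_1\mathbf{V}_2^{-1}\mathbf{V}_1^{-1}=\mathbf{F}_1\big(\beta_{\,\textrm{C}}\mathbf{I}+\eta\beta_{\,\textrm{I}}\mathbf{V}_2^{-1}\big)\mathbf{V}_1^{-1}=\mathbf{L}$. Since the spectrum of a block-triangular matrix is the union of the spectra of its diagonal blocks, $\mathcal{R}_0=\rho(\mathbf{F}\mathbf{V}^{-1})=\max\{\rho(\mathbf{L}),0\}=\rho(\mathbf{L})$, which is the claimed identity; I would close by remarking that $\mathbf{V}$ is a nonsingular M-matrix and $\mathbf{F}$ is entrywise nonnegative, so the hypotheses of the van den Driessche--Watmough theorem are met and this $\mathcal{R}_0$ is the threshold controlling local stability of $\mathcal{E}_0$. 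The part needing the most care is not any single calculation but pinning down the $\mathcal{V}^-/\mathcal{V}^+$ sign bookkeeping and the block inversion correctly, and then cleanly justifying the collapse of the $4\times 4$ spectral-radius problem to the $2\times 2$ matrix $\mathbf{L}$.
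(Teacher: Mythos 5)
Your proposal is correct and follows essentially the same route as the paper's proof: build $\mathbf{F}$ and $\mathbf{V}$ from the NGM construction with $\kappa=1$, exploit the lower block-triangular form $\mathbf{V}=\left[\begin{smallmatrix}\mathbf{V}_1&\mathbf{0}\\ -\eta\mathbf{I}&\mathbf{V}_2\end{smallmatrix}\right]$ to get $\mathbf{V}^{-1}=\left[\begin{smallmatrix}\mathbf{V}_1^{-1}&\mathbf{0}\\ \eta\,\mathbf{V}_2^{-1}\mathbf{V}_1^{-1}&\mathbf{V}_2^{-1}\end{smallmatrix}\right]$, and observe that $\mathbf{F}\mathbf{V}^{-1}$ is block-triangular with nontrivial block $\mathbf{L}=\mathbf{F}_1(\beta_{\,\textrm{C}}\mathbf{I}+\eta\beta_{\,\textrm{I}}\mathbf{V}_2^{-1})\mathbf{V}_1^{-1}$, so $\mathcal{R}_0=\rho(\mathbf{L})$. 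Your additional checks (nonsingularity of $\mathbf{V}_1,\mathbf{V}_2$ and the van den Driessche--Watmough hypotheses) are sound refinements of the same argument, not a different method.
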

\begin{proof}
	See Appendix~\ref{app_A}.
\end{proof}

We now define two thresholds, namely $R_0^{(1)}$ and $R_0^{(2)}$, as:\vspace{-0.3em}
\begin{subnumcases}{\label{eq:R012}}
	\!R_0^{(1)} =&$\!\!\!\!d_1 \left(\cfrac{\beta_{\,\textrm{C}}}{\eta'} + \cfrac{\eta\, \beta_{\,\textrm{I}}}{\eta' \delta}\right),$ \label{eq:R01} \vspace{0.2em}\\
	\!R_0^{(2)} =&$\!\!\!\!p\,d_2 \left(\cfrac{\beta_{\,\textrm{C}}}{\eta'} + \cfrac{\eta \left(\delta \!+\! \gamma^{1,\textrm{I}}\right) \beta_{\,\textrm{I}}}{\eta' \delta \left(\delta \!+\! \gamma^{1,\textrm{I}} \!+\! \gamma^{2,\textrm{I}}\right)}\right).$ \label{eq:R02}
    \vspace{-0.3em}
\end{subnumcases}
Theoretically, the threshold $R_0^{(1)}$ corresponds to the secondary number of infections caused by a carrier node over graph $G_1$ as depicted in \figurename{~\ref{fig_x}a}. In fact, the term $1/\eta'$ ($1/\delta$) in \eqref{eq:R01} is the average time spent by a node in carrier (infected) state. Accordingly, the term $\beta_{\,\textrm{C}}/\eta'$ ($\beta_{\,\textrm{I}}/\delta$) in \eqref{eq:R01} is the~average number of infections produced per link by a carrier (infected) node. Also, a carrier node proceeds to~state $\textrm{I}$ with probability $\eta/\eta'$. Similarly, $R_0^{(2)}$ is the epidemic threshold of the model given in \figurename{~\ref{fig_x}b} over graph $G_2$, where each link in $G_2$ exists with probability $p$. Within this particular framework, infected nodes isolate (do not isolate) with rate $\gamma^{2,\textrm{I}}\, (\gamma^{1,\textrm{I}})$. However, note that those in isolation (i.e., in state $\textrm{I}_{\text{iso}}$) do not transmit the infection to others. The equation \eqref{eq:R02} includes the term $\left(\delta \!+\! \gamma^{1,\textrm{I}}\right)/\left(\delta(\delta \!+\! \gamma^{1,\textrm{I}} \!+\! \gamma^{2,\textrm{I}})\right)$, which represents the average residual time of an infected node in state $\textrm{I}$ shown in \figurename{~\ref{fig_x}b} before transitioning to the recovered state. The subsequent proposition introduces the conditions on the activity rates of carrier nodes under which the disease dies~out or persists in the network.\vspace{-0.2em}
\begin{prop}
	\label{prop1}
	Given the definitions of $R_0^{(1)}$ and $R_0^{(2)}$ in \eqref{eq:R012}, the stability of the DFE can be characterized as follows:
	\begin{itemize}[leftmargin=*,labelindent=2.5mm,labelsep=3.0mm]
	\setlength\itemsep{0.2em}
		\item Case I: If $R_0^{(1)} \!+\! R_0^{(2)} \!<\! 1$, the DFE is stable for all $\gamma^1$.
		\item Case II: If $R_0^{(1)} \!>\! 1$, then the DFE is unstable for all $\gamma^1$.
		\item Case III: If $R_0^{(1)} \!<\! 1$ and $R_0^{(1)} \!+\! R_0^{(2)} \!>\! 1$, then there exists some $\gamma^{1\ast} > 0$, where for $\gamma^1 < \gamma^{1\ast}$, the DFE is stable.
	\end{itemize}
\end{prop}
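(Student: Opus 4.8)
The plan is to treat the basic reproduction number $\mathcal{R}_0=\rho(\mathbf{L})$ from Lemma~\ref{lemma1} as a scalar function of the control variable $\gamma^1$ (with $\gamma^2$ and all epidemic rates held fixed), and to show it is continuous and strictly increasing on $[0,\infty)$ with $\mathcal{R}_0\to R_0^{(1)}$ as $\gamma^1\to 0^+$ and $\mathcal{R}_0\to R_0^{(1)}+R_0^{(2)}$ as $\gamma^1\to\infty$. Since $\mathcal{E}_0$ is stable exactly when $\mathcal{R}_0<1$, the three cases are then immediate: in Case~I the supremum $R_0^{(1)}+R_0^{(2)}$ of $\mathcal{R}_0$ is already below $1$; in Case~II the infimum $R_0^{(1)}$ already exceeds $1$; and in Case~III, $\mathcal{R}_0$ rises continuously and strictly from $R_0^{(1)}<1$ to $R_0^{(1)}+R_0^{(2)}>1$, so the intermediate value theorem yields a unique $\gamma^{1\ast}>0$ with $\mathcal{R}_0(\gamma^{1\ast})=1$ and $\mathcal{R}_0<1$ precisely for $\gamma^1<\gamma^{1\ast}$.

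The first step is to simplify $\mathbf{L}$ at the DFE. With $\mathbf{s}=(S^1,S^2)^{\top}$ and the all-ones row $\mathbf{1}^{\top}=(1,1)$, the identities $\mathbf{V}_1\mathbf{s}=\eta'\mathbf{s}$, $\mathbf{1}^{\top}\mathbf{V}_1^{-1}=(\eta')^{-1}\mathbf{1}^{\top}$ and $\mathbf{1}^{\top}\mathbf{V}_2^{-1}=\delta^{-1}\mathbf{1}^{\top}$ (read off the $2\times 2$ matrices of Lemma~\ref{lemma1}, using $S^1+S^2=1$) peel the static-layer part off as a rank-one term, $\mathbf{L}=R_0^{(1)}\,\mathbf{s}\,\mathbf{1}^{\top}+\widetilde{\mathbf{L}}(\gamma^1)$, where $\widetilde{\mathbf{L}}(\gamma^1)\ge 0$ carries the temporal-layer ($p\,d_2$) terms, has a zero first row, and is proportional to $S^2=\gamma^1/(\gamma^1+\gamma^2)$. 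Hence $\Tr\mathbf{L}=R_0^{(1)}+\widetilde{L}_{22}$ and $\det\mathbf{L}=R_0^{(1)}S^1(\widetilde{L}_{22}-\widetilde{L}_{21})$ with $\widetilde{L}_{22}-\widetilde{L}_{21}>0$ for $\gamma^1>0$, so $\mathcal{R}_0=\tfrac12\big(\Tr\mathbf{L}+\sqrt{(\Tr\mathbf{L})^2-4\det\mathbf{L}}\big)$ (the discriminant is automatically nonnegative since $\mathbf{L}\ge 0$ entrywise). Letting $\gamma^1\to 0^+$ makes $S^2\to 0$, so $\widetilde{\mathbf{L}}\to 0$ and $\mathbf{L}$ tends to a triangular matrix of spectral radius $R_0^{(1)}$, whence $\mathcal{R}_0\to R_0^{(1)}$; a direct substitution at $\gamma^1\to\infty$ ($S^1\to 0$, $S^2\to 1$) gives $R_0^{(1)}+R_0^{(2)}$. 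Also, the entrywise bound $\mathbf{L}\ge R_0^{(1)}\mathbf{s}\mathbf{1}^{\top}$ gives $\mathcal{R}_0\ge\rho(R_0^{(1)}\mathbf{s}\mathbf{1}^{\top})=R_0^{(1)}$ for every $\gamma^1$, which already proves Case~II.

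The heart of the argument is monotonicity. Applying the implicit function theorem to $\chi(\lambda,\gamma^1)=\lambda^2-\Tr\mathbf{L}\,\lambda+\det\mathbf{L}$ at $\lambda=\mathcal{R}_0$, where $\partial_\lambda\chi=2\mathcal{R}_0-\Tr\mathbf{L}=\sqrt{(\Tr\mathbf{L})^2-4\det\mathbf{L}}>0$, shows $d\mathcal{R}_0/d\gamma^1$ has the sign of $\mathcal{R}_0\,\tfrac{d}{d\gamma^1}\Tr\mathbf{L}-\tfrac{d}{d\gamma^1}\det\mathbf{L}$. One then checks that $\widetilde{L}_{22}$, a ratio of quadratics in $\gamma^1$, is strictly increasing, that $\widetilde{L}_{21}$ is increasing and $S^1\le 1$ is decreasing, so that $\tfrac{d}{d\gamma^1}\big(S^1(\widetilde{L}_{22}-\widetilde{L}_{21})\big)<\tfrac{d}{d\gamma^1}\widetilde{L}_{22}$ and hence $\tfrac{d}{d\gamma^1}\det\mathbf{L}<R_0^{(1)}\tfrac{d}{d\gamma^1}\Tr\mathbf{L}\le\mathcal{R}_0\,\tfrac{d}{d\gamma^1}\Tr\mathbf{L}$ (using $\mathcal{R}_0\ge R_0^{(1)}$), giving $d\mathcal{R}_0/d\gamma^1>0$. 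This monotonicity check — clearing the positive denominators $(\gamma^1+\gamma^2)$ and $(\eta'+\gamma^1+\gamma^2)$ and verifying the resulting polynomial signs — is where the only real computation lies and is the main obstacle; a short Perron--Frobenius comparison does not suffice here, since $\mathbf{L}(\gamma^1)$ has a strictly positive first row while its $\gamma^1\to\infty$ limit does not, so $\mathbf{L}$ cannot simply be dominated by the limiting matrix. With continuity, strict monotonicity and the two endpoint values in hand, Cases~I and~III follow as in the first paragraph, each via the criterion that $\mathcal{E}_0$ is stable iff $\mathcal{R}_0<1$.
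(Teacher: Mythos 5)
Your argument is correct, but it takes a genuinely different route from the paper's proof in Appendix~\ref{app_B}. There, the authors work with the explicit entries of $\mathbf{L}$ in \eqref{eq10}: nonnegativity of the discriminant is shown via the comparison $z_1 \ge z_2$; Case~I is handled by reducing $\rho(\mathbf{L})<1$ to $\Tr(\mathbf{L})-\det(\mathbf{L})<1$ and checking that the resulting inequality \eqref{eq17} holds for every $\gamma^1$; Case~II by reducing $\mathcal{R}_0 \ge R_0^{(1)}$ to $\eta'+\gamma^1 \ge S^1\eta'$; and Case~III by recasting the stability inequality as a cubic in $\gamma^1$ and reasoning about the product of its roots. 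You instead establish one structural fact---$\mathcal{R}_0(\gamma^1)$ is continuous and strictly increasing from $R_0^{(1)}$ (as $\gamma^1\to 0^+$) to $R_0^{(1)}+R_0^{(2)}$ (as $\gamma^1\to\infty$)---and read off all three cases uniformly. Your rank-one splitting $\mathbf{L}=R_0^{(1)}\mathbf{s}\,\mathbf{1}^{\top}+\widetilde{\mathbf{L}}$ agrees with \eqref{eq10}; the Perron--Frobenius bound $\mathcal{R}_0\ge\rho\bigl(R_0^{(1)}\mathbf{s}\,\mathbf{1}^{\top}\bigr)=R_0^{(1)}$ (valid because $\widetilde{\mathbf{L}}\ge 0$, which follows from $R_0^{(2)}\ge p\,d_2\,\eta\,w/\eta'$, the same inequality the paper invokes in its Case~II) is a cleaner substitute for the paper's Case~II computation, and your remark that a nonnegative $2\times 2$ matrix automatically satisfies $\Tr(\mathbf{L})^2-4\det(\mathbf{L})=(\mathbf{L}_{11}-\mathbf{L}_{22})^2+4\mathbf{L}_{12}\mathbf{L}_{21}\ge 0$ is simpler than the $z_1\ge z_2$ check. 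The monotonicity step you flag as the main computation does go through with exactly your ingredients: $\frac{d}{d\gamma^1}\Tr(\mathbf{L})=\widetilde{L}_{22}'>0$, $\det(\mathbf{L})=R_0^{(1)}S^1(\widetilde{L}_{22}-\widetilde{L}_{21})$ with $\widetilde{L}_{22}-\widetilde{L}_{21}=S^2\,p\,d_2\,\eta\,w/(\eta'+\gamma^1+\gamma^2)\ge 0$, and combining $(S^1)'<0$, $S^1\le 1$, and $\widetilde{L}_{21}'\ge 0$ (splitting on the sign of $(\widetilde{L}_{22}-\widetilde{L}_{21})'$) gives $\bigl(S^1(\widetilde{L}_{22}-\widetilde{L}_{21})\bigr)'<\widetilde{L}_{22}'$, hence $\frac{d}{d\gamma^1}\det(\mathbf{L})<R_0^{(1)}\frac{d}{d\gamma^1}\Tr(\mathbf{L})\le\mathcal{R}_0\,\frac{d}{d\gamma^1}\Tr(\mathbf{L})$ and $d\mathcal{R}_0/d\gamma^1>0$. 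What your route buys is a sharper Case~III---uniqueness of $\gamma^{1\ast}$ and stability exactly for $\gamma^1<\gamma^{1\ast}$---which the paper's root-product argument does not deliver; what the paper's route buys is that it avoids derivatives entirely, relying only on sign checks of closed-form expressions.
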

\begin{proof}
	See Appendix~\ref{app_B}.
    \vspace{-0.4em}
\end{proof}
\begin{figure}[!t]
	\centering
	\includegraphics[width=0.72\columnwidth]{./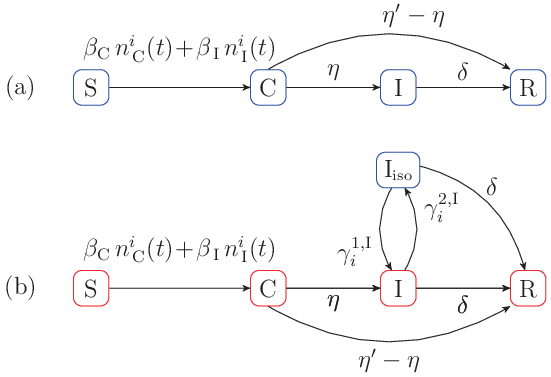}
	\vspace{-0.6em}
 	\caption{Epidemic models corresponding to (a) $R^{(1)}_0$ and (b) $R^{(2)}_0$ thresholds. The number of carrier and infected neighbors of node $i$ are $n^i_{\,\textrm{C}}(t)$ and $n^i_{\,\textrm{I}}(t)$, respectively.}
	\label{fig_x}
 \vspace{-0.4em}
\end{figure}

Building on the intuitions provided for $R_0^{(1)}$ and $R_0^{(2)}$,~it can~be shown that $R_0^{(1)} \!+\! R_0^{(2)}$ marks the epidemic threshold for the network when $\gamma^2 \!=\! 0$, implying that the susceptible and carrier nodes remain active indefinitely upon activation, thus having the most significant impact on disease propagation. Therefore, if the epidemic dissipates under such~circumstances, i.e., if $R_0^{(1)} \!+\! R_0^{(2)} \!<\! 1$ (as shown in Case~I~of Proposition~\ref{prop1}), the DFE will remain stable for all $\gamma^1$ values in the original network.\vspace{-0.6em}

\subsection{Extended SCIR Model}
\label{sec:ext_scir}
\fontdimen2\font=0.50ex
Unlike the preceding sub-section, where susceptible nodes~become carriers upon contracting the infection (i.e., $\kappa \!=\! 1$), we now proceed with the setting when $\kappa \!\neq\! 1$. The corresponding epidemic threshold of the extended SCIR network model is derived in the following lemma.
\begin{lem}
\label{lemma2}
	For the extended SCIR model, the basic reproduction number is $\mathcal{R}_0 \!=\! \rho \left(\kappa \mathbf{L}+ \bar{\kappa} \mathbf{U}\right)$, where $\mathbf{U} \!=\! \beta_{\,\textrm{I}} \mathbf{F}_1\cdot \mathbf{V}^{-1}_2$~and matrices $\mathbf{L}$, $\mathbf{F}_1$, and $\mathbf{V}_2$ are the same as in Lemma~\ref{lemma1}. 
    \vspace{-0.3em}
\end{lem}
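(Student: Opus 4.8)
The plan is to run the next-generation-matrix (NGM) construction set up just before Lemma~\ref{lemma1}, but now keeping all four infection compartments $x_1=C^1$, $x_2=C^2$, $x_3=I^1$, $x_4=I^2$ active: since $\bar{\kappa}\neq 0$, a newly infected susceptible may land directly in $\textrm{I}^1$ or $\textrm{I}^2$, so the new-infection rates $\mathcal{F}_3,\mathcal{F}_4$ no longer vanish as they did for $\kappa=1$. Reading the last terms of \eqref{eq:MFHomoC1}--\eqref{eq:MFHomoI2} together with the homogeneous forms \eqref{eq:beta1e_homo}--\eqref{eq:beta2e_homo}, I would set $\mathcal{F}_1=\kappa\,\mathbb{E}[\beta^1]S^1$, $\mathcal{F}_2=\kappa\,\mathbb{E}[\beta^2]S^2$, $\mathcal{F}_3=\bar{\kappa}\,\mathbb{E}[\beta^1]S^1$, $\mathcal{F}_4=\bar{\kappa}\,\mathbb{E}[\beta^2]S^2$, and differentiate with respect to $(x_1,x_2,x_3,x_4)$ at the DFE $\mathcal{E}_0$. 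Because $\mathbb{E}[\beta^1]=\mathbb{E}[\beta^2]=0$ at $\mathcal{E}_0$, only the derivatives hitting the bilinear factor survive, and this yields the rank-$\leq 2$ block form
\[
\mathbf{F}=\begin{bmatrix}\kappa\beta_{\,\textrm{C}}\mathbf{F}_1 & \kappa\beta_{\,\textrm{I}}\mathbf{F}_1\\ \bar{\kappa}\beta_{\,\textrm{C}}\mathbf{F}_1 & \bar{\kappa}\beta_{\,\textrm{I}}\mathbf{F}_1\end{bmatrix}=\begin{bmatrix}\kappa\mathbf{I}\\ \bar{\kappa}\mathbf{I}\end{bmatrix}\!\begin{bmatrix}\beta_{\,\textrm{C}}\mathbf{F}_1 & \beta_{\,\textrm{I}}\mathbf{F}_1\end{bmatrix},
\]
with $\mathbf{F}_1$ exactly the matrix of Lemma~\ref{lemma1} and $\mathbf{I}$ the $2\times 2$ identity. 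The transfer matrix $\mathbf{V}$ is unchanged from the $\kappa=1$ derivation, since the $\textrm{C}\!\to\!\textrm{I}$ progression at rate $\eta$ is a transfer between infected compartments rather than a new infection; collecting the non-infection terms of \eqref{eq:MFHomoC1}--\eqref{eq:MFHomoI2} shows $\mathbf{V}$ is block lower-triangular, $\mathbf{V}=\bigl[\begin{smallmatrix}\mathbf{V}_1 & \mathbf{0}\\ -\eta\mathbf{I} & \mathbf{V}_2\end{smallmatrix}\bigr]$, with $\mathbf{V}_1,\mathbf{V}_2$ as in Lemma~\ref{lemma1}.

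Next I would invert $\mathbf{V}$ by the block-triangular formula, $\mathbf{V}^{-1}=\bigl[\begin{smallmatrix}\mathbf{V}_1^{-1} & \mathbf{0}\\ \eta\,\mathbf{V}_2^{-1}\mathbf{V}_1^{-1} & \mathbf{V}_2^{-1}\end{smallmatrix}\bigr]$, and multiply out. Factoring $\mathbf{F}_1$ out on the left gives $\bigl[\beta_{\,\textrm{C}}\mathbf{F}_1\;\;\beta_{\,\textrm{I}}\mathbf{F}_1\bigr]\mathbf{V}^{-1}=\bigl[\mathbf{L}\;\;\mathbf{U}\bigr]$, where $\mathbf{L}=\mathbf{F}_1(\beta_{\,\textrm{C}}\mathbf{I}+\eta\,\beta_{\,\textrm{I}}\mathbf{V}_2^{-1})\mathbf{V}_1^{-1}$ is precisely the matrix from Lemma~\ref{lemma1} and $\mathbf{U}=\beta_{\,\textrm{I}}\mathbf{F}_1\mathbf{V}_2^{-1}$ is the one defined in the statement. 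Hence $\mathbf{F}\mathbf{V}^{-1}=\bigl[\begin{smallmatrix}\kappa\mathbf{I}\\ \bar{\kappa}\mathbf{I}\end{smallmatrix}\bigr]\bigl[\mathbf{L}\;\;\mathbf{U}\bigr]=\bigl[\begin{smallmatrix}\kappa\mathbf{L} & \kappa\mathbf{U}\\ \bar{\kappa}\mathbf{L} & \bar{\kappa}\mathbf{U}\end{smallmatrix}\bigr]$.

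The remaining, and main, step is to collapse the spectral radius of this $4\times4$ matrix to that of a $2\times 2$ matrix. Writing $\mathbf{F}\mathbf{V}^{-1}=\mathbf{P}\mathbf{Q}$ with $\mathbf{P}=\bigl[\begin{smallmatrix}\kappa\mathbf{I}\\ \bar{\kappa}\mathbf{I}\end{smallmatrix}\bigr]\in\mathbb{R}^{4\times 2}$ and $\mathbf{Q}=\bigl[\mathbf{L}\;\;\mathbf{U}\bigr]\in\mathbb{R}^{2\times 4}$, the nonzero eigenvalues of $\mathbf{P}\mathbf{Q}$ coincide with those of $\mathbf{Q}\mathbf{P}=\kappa\mathbf{L}+\bar{\kappa}\mathbf{U}$. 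Because $\mathbf{V}_1$ and $\mathbf{V}_2$ are nonsingular M-matrices (each has nonpositive off-diagonal entries together with positive trace and determinant, the latter being $\eta'(\gamma^{1}+\gamma^{2}+\eta')$ and $\delta(\gamma^{1,\textrm{I}}+\gamma^{2,\textrm{I}}+\delta)$), we get $\mathbf{V}_1^{-1},\mathbf{V}_2^{-1}\geq 0$ entrywise, so $\mathbf{F}\mathbf{V}^{-1}$ and $\kappa\mathbf{L}+\bar{\kappa}\mathbf{U}$ are nonnegative; by Perron--Frobenius each spectral radius is attained by an eigenvalue, and the coincidence of the nonzero spectra (the degenerate case $\rho=0$ being immediate) yields $\mathcal{R}_0=\rho(\mathbf{F}\mathbf{V}^{-1})=\rho(\kappa\mathbf{L}+\bar{\kappa}\mathbf{U})$. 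Setting $\kappa=1$ recovers Lemma~\ref{lemma1}, a useful consistency check. I expect the $\mathbf{P}\mathbf{Q}$/$\mathbf{Q}\mathbf{P}$ reduction, together with the nonnegativity bookkeeping needed to upgrade ``equal nonzero spectra'' to ``equal spectral radii'', to be the only delicate point; an alternative that sidesteps it is to expand $\det(\lambda\mathbf{I}_4-\mathbf{F}\mathbf{V}^{-1})$ via a Schur complement and observe that it equals $\lambda^2\det\!\bigl(\lambda\mathbf{I}_2-(\kappa\mathbf{L}+\bar{\kappa}\mathbf{U})\bigr)$, from which the equality of spectral radii is manifest.
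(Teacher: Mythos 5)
Your proposal is correct and follows essentially the same route as the paper: you build the same block matrices $\mathbf{F}$ and $\mathbf{V}$, invert $\mathbf{V}$ by block triangularity, and arrive at $\mathbf{F}\mathbf{V}^{-1}=\bigl[\begin{smallmatrix}\kappa\mathbf{L} & \kappa\mathbf{U}\\ \bar{\kappa}\mathbf{L} & \bar{\kappa}\mathbf{U}\end{smallmatrix}\bigr]$ exactly as in Appendix~C. The only difference is the final reduction: the paper argues directly on an eigenvector $(\nu_1,\nu_2)$ of $\mathbf{F}\mathbf{V}^{-1}$ to deduce $\bar{\kappa}\nu_1=\kappa\nu_2$ and hence $(\kappa\mathbf{L}+\bar{\kappa}\mathbf{U})\nu_1=\lambda\nu_1$, whereas you invoke the equivalent $\mathbf{P}\mathbf{Q}$/$\mathbf{Q}\mathbf{P}$ (rank-factorization) spectral identity, which reaches the same conclusion and even makes your Perron--Frobenius bookkeeping unnecessary, since equality of the nonzero spectra already forces equality of the spectral radii.
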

\begin{proof}
	See Appendix~\ref{app_C}.
 \vspace{-0.2em}
\end{proof}

The proposition below delineates conditions for epidemic~outbreak in the extended model by defining two new thresholds in terms of ${R}_0^{(1)}$ and ${R}_0^{(2)}$ given in \eqref{eq:R012}.
\begin{prop}
	\label{prop2}
	By defining the following two thresholds in terms of ${R}_0^{(1)}$ and ${R}_0^{(2)}$:\vspace{-0.3em}
	\begin{equation}
  		\begin{cases}
    		\tilde{R}_0^{(1)}\!\!\!\! &= \kappa\, {R}_0^{(1)} + \bar{\kappa}\, \cfrac{d_1 \beta_{\,\textrm{I}}}{\delta}\,,  \vspace{0.2em} \\
    
    		\tilde{R}_0^{(2)}\!\!\!\! &= \kappa\, {R}_0^{(2)} + \bar{\kappa}\, \cfrac{p \, d_2 \left(\delta+\gamma^{1,\textrm{I}}\right) \beta_{\,\textrm{I}}}{\delta(\delta+\gamma^{1,\textrm{I}}+\gamma^{2,\textrm{I}})}\,, 
  		\end{cases}
  		\label{eq11}
  		\vspace{-0.2em}
	\end{equation}
	we arrive at the following stability conditions:
	\begin{itemize}[leftmargin=*,labelindent=2.5mm,labelsep=3.0mm]
	\setlength\itemsep{0.2em}
		\item Case I: If $\tilde{R}_0^{(1)} \!>\! 1$, then the DFE is unstable for all $\gamma^1$.
		\item Case II: If $\tilde{R}_0^{(1)\!} \!+\! \tilde{R}_0^{(2)\!} \!<\! 1$, the DFE is stable for all~$\gamma^1$.
		\item Case III: If $\tilde{R}_0^{(1)} \!<\! 1$ and $\tilde{R}_0^{(1)} \!+\! \tilde{R}_0^{(2)} > 1$, then there exists some $\gamma^{1*} > 0$, where for $\gamma^1 < \gamma^{1*}$, the DFE is stable.
	\end{itemize}
\end{prop}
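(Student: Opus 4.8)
The plan is to run the same next-generation-matrix analysis as for Proposition~\ref{prop1} (Appendix~\ref{app_B}), with $\mathbf{L}$ replaced everywhere by the non-negative $2\times 2$ matrix $\mathbf{M}:=\kappa\mathbf{L}+\bar{\kappa}\mathbf{U}$ and $R_0^{(i)}$ by $\tilde{R}_0^{(i)}$. By Lemma~\ref{lemma2} and the stability criterion recalled before Lemma~\ref{lemma1}, the DFE $\mathcal{E}_0$ is locally asymptotically stable when $\mathcal{R}_0=\rho(\mathbf{M})<1$ and unstable when $\rho(\mathbf{M})>1$, so the whole argument reduces to bounding $\rho(\mathbf{M})$ as $\gamma^1$ ranges over $(0,\infty)$ (note $\tilde{R}_0^{(1)},\tilde{R}_0^{(2)}$ do not depend on $\gamma^1$). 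I would first write $\mathbf{M}=\mathbf{F}_1\mathbf{G}$ with $\mathbf{G}:=\kappa\big(\beta_{\,\textrm{C}}\mathbf{I}+\eta\,\beta_{\,\textrm{I}}\mathbf{V}_2^{-1}\big)\mathbf{V}_1^{-1}+\bar{\kappa}\beta_{\,\textrm{I}}\mathbf{V}_2^{-1}\ge 0$, and record the column-sum identities $\mathbf{1}^{\top}\mathbf{V}_1^{-1}=\tfrac{1}{\eta'}\mathbf{1}^{\top}$ and $\mathbf{1}^{\top}\mathbf{V}_2^{-1}=\tfrac{1}{\delta}\mathbf{1}^{\top}$ (each column of $\mathbf{V}_1$, $\mathbf{V}_2$ sums to $\eta'$, $\delta$), which give $\mathbf{1}^{\top}\mathbf{G}=\tfrac{\tilde{R}_0^{(1)}}{d_1}\mathbf{1}^{\top}$; together with $\mathbf{1}^{\top}\mathbf{F}_1=(d_1,\,d_1+p\,d_2\,S^2)$ this shows the two column sums of $\mathbf{M}$ are $\tilde{R}_0^{(1)}+p\,d_2\,S^2 G_{21}$ and $\tilde{R}_0^{(1)}+p\,d_2\,S^2 G_{22}$.

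Next I would establish the two bounds $\tilde{R}_0^{(1)}\le\rho(\mathbf{M}(\gamma^1))\le\tilde{R}_0^{(1)}+\tilde{R}_0^{(2)}$ valid for all $\gamma^1$. For the lower bound, deleting the temporal layer ($p\,d_2=0$) makes $\mathbf{F}_1$ rank one, so $\mathbf{M}$ is rank one with $\rho(\mathbf{M})=\tilde{R}_0^{(1)}$ for every $\gamma^1$; restoring $p\,d_2>0$ only enlarges the $(2,2)$ entry of $\mathbf{F}_1$, hence (as $\mathbf{G}\ge0$) every entry of $\mathbf{M}$, so $\rho(\mathbf{M})\ge\tilde{R}_0^{(1)}$ by monotonicity of the Perron root. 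For the upper bound I would show $\max\{G_{21},G_{22}\}\le\tilde{R}_0^{(2)}/(p\,d_2)$: since the columns of $\mathbf{V}_1^{-1}$ each sum to $1/\eta'$ with the second-row entry increasing in $\gamma^1$, and the second-row entries of $\beta_{\,\textrm{C}}\mathbf{I}+\eta\beta_{\,\textrm{I}}\mathbf{V}_2^{-1}$ dominate its first-row entries, the quantities $G_{21},G_{22}$ are non-decreasing in $\gamma^1$ and reach their supremum as $\gamma^1\to\infty$, where a direct evaluation gives $p\,d_2\,G_{22}\to\tilde{R}_0^{(2)}$ and $G_{21}\le G_{22}$; combined with $S^2\le1$, every column sum of $\mathbf{M}$ is then $\le\tilde{R}_0^{(1)}+\tilde{R}_0^{(2)}$, so $\rho(\mathbf{M})\le\tilde{R}_0^{(1)}+\tilde{R}_0^{(2)}$. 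Finally, evaluating $\mathbf{M}$ at the endpoints (where one row of $\mathbf{F}_1$ vanishes) yields $\rho(\mathbf{M}(\gamma^1))\to\tilde{R}_0^{(1)}$ as $\gamma^1\to0^+$ and $\rho(\mathbf{M}(\gamma^1))\to\tilde{R}_0^{(1)}+\tilde{R}_0^{(2)}$ as $\gamma^1\to\infty$.

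The three cases then follow. In Case~I, $\rho(\mathbf{M})\ge\tilde{R}_0^{(1)}>1$ for all $\gamma^1$, so $\mathcal{E}_0$ is unstable. In Case~II, $\rho(\mathbf{M})\le\tilde{R}_0^{(1)}+\tilde{R}_0^{(2)}<1$ for all $\gamma^1$, so $\mathcal{E}_0$ is stable. In Case~III, $\gamma^1\mapsto\rho(\mathbf{M}(\gamma^1))$ is continuous with $\lim_{\gamma^1\to0^+}\rho(\mathbf{M})=\tilde{R}_0^{(1)}<1$, so there is a $\gamma^{1*}>0$ with $\rho(\mathbf{M})<1$ on $(0,\gamma^{1*})$, i.e.\ $\mathcal{E}_0$ is stable there; and since $\rho(\mathbf{M})\to\tilde{R}_0^{(1)}+\tilde{R}_0^{(2)}>1$ as $\gamma^1\to\infty$, stability is genuinely lost for large $\gamma^1$, so $\gamma^{1*}$ can be taken finite.

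The main obstacle I anticipate is the upper bound $\rho(\mathbf{M})\le\tilde{R}_0^{(1)}+\tilde{R}_0^{(2)}$, i.e.\ showing the temporal layer's contribution is largest when susceptible/carrier nodes are always active. It is not immediate because $\mathbf{F}_1$ is itself \emph{not} entrywise monotone in $\gamma^1$ (the entry $d_1S^1$ decreases while the $S^2$-entries increase), so one cannot simply dominate $\mathbf{M}(\gamma^1)$ by its $\gamma^1\to\infty$ limit; the column-sum reduction above is what isolates the genuinely monotone quantities $G_{21},G_{22}$. A minor bookkeeping point is tracking the $\bar{\kappa}\mathbf{U}$ term, but since $\mathbf{U}=\beta_{\,\textrm{I}}\mathbf{F}_1\mathbf{V}_2^{-1}$ shares the factors $\mathbf{F}_1$ and $\mathbf{V}_2^{-1}$ with $\mathbf{L}$, all identities used for $\mathbf{L}$ in Appendix~\ref{app_B} carry over verbatim under the replacement $R_0^{(i)}\mapsto\tilde{R}_0^{(i)}$.
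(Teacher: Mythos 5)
Your proposal is correct, but it takes a genuinely different route from the paper's. The paper's Appendix~\ref{app_D} works with $\mathbf{H}=\kappa\mathbf{L}+\bar{\kappa}\mathbf{U}$ in closed form: it computes $\Tr(\mathbf{H})$ and $\det(\mathbf{H})$ in terms of $\tilde{R}_0^{(1)}$, $\tilde{R}_0^{(2)}$, $S^1$, $S^2$, shows the discriminant is nonnegative so that $\rho(\mathbf{H})$ is the larger real root of the characteristic quadratic, and then reruns the Appendix~\ref{app_B} machinery: the inequality $\Tr(\mathbf{H})-\det(\mathbf{H})<1$ for the stability case, an algebraic comparison $\mathcal{R}_0\geq\tilde{R}_0^{(1)}$ for the instability case, and a cubic in $\gamma^1$ with a root-product sign argument for Case~III. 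You instead exploit structure: the constant column sums of $\mathbf{V}_1$ and $\mathbf{V}_2$ give $\mathbf{1}^{\top}\mathbf{G}=(\tilde{R}_0^{(1)}/d_1)\mathbf{1}^{\top}$, Perron--Frobenius monotonicity and the column-sum bound yield the sandwich $\tilde{R}_0^{(1)}\leq\rho(\mathbf{M})\leq\tilde{R}_0^{(1)}+\tilde{R}_0^{(2)}$, and continuity in $\gamma^1$ disposes of Case~III. Your route avoids the explicit eigenvalue formula and the cubic analysis altogether and makes the roles of the two thresholds transparent (static-layer contribution as the common column sum, temporal-layer contribution bounded by $p\,d_2\,S^2 G_{2j}$); what it gives up is the closed-form expression for $\mathcal{R}_0$ and the explicit characterization of $\gamma^{1*}$ as a root of a polynomial, which the paper's computation provides as a by-product. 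One justification in your sketch needs repair: the monotonicity of $G_{21},G_{22}$ in $\gamma^1$ does \emph{not} follow from the claim that the second row of $W:=\beta_{\textrm{C}}\mathbf{I}+\eta\,\beta_{\textrm{I}}\mathbf{V}_2^{-1}$ dominates its first row (that entrywise domination is false in general, e.g.\ $W_{21}<W_{11}$); what is needed, and true, is the within-row comparison $W_{22}\geq W_{21}$, since as $\gamma^1$ grows each column of $\mathbf{V}_1^{-1}$ shifts mass from its first to its second entry at constant sum $1/\eta'$. With that fix (and $S^2\leq1$) your bound $p\,d_2\,G_{2j}\leq\tilde{R}_0^{(2)}$ goes through; indeed you can skip monotonicity entirely by bounding $G_{2j}\leq\kappa\,W_{22}/\eta'+\bar{\kappa}\,\beta_{\textrm{I}}\,(\mathbf{V}_2^{-1})_{22}=\tilde{R}_0^{(2)}/(p\,d_2)$ directly.
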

\begin{proof}
	See Appendix~\ref{app_D}. 
\end{proof}

The following section will focus on optimizing the activity rates of both susceptible and carrier nodes in a generalized~scenario, i.e., under heterogeneous activity patterns.\vspace{-0.3em}

\section{Optimal Activity Control}
\label{sec:control}
\fontdimen2\font=0.50ex
We now focus on controlling the activity cost of susceptible and carrier nodes while minimizing the spread of infection in the network.  In the context of activity homogeneity, studied in the preceding section, this involves minimizing the activity rate $\gamma^1$ to reduce the basic reproduction number within a given budget $(\mathcal{C})$. Hence, the homogeneous optimization problem is:\vspace{-0.4em}
\begin{subequations}
	\begin{empheq}[]{align}
		\mathcal{P}_{hom}:\,\,\, \min_{\gamma^1} \quad & \mathcal{R}_0=\rho(\mathbf{L})  \label{eq:20a}  \\
		\textrm{s.t.} 
		\quad & f(\gamma^1) \leq \mathcal{C},  \label{eq:20b} \\
  							& \underline{\gamma}^1 \leq \gamma^1 \leq \overline{\gamma}^1,\quad \forall i \in V,  \label{eq:20c}
  	\end{empheq}
  	\label{eq20:Homo}
\end{subequations}
\!\!where $f(\gamma^1)$ denotes the total activity cost and $\underline{\gamma}^1$ ($\overline{\gamma}^1$) is the minimum (maximum) feasible activity rate. Note that $\rho(\mathbf{L})$~is~a function of $\gamma^1$ (derived in \eqref{eq16n} and \eqref{eq:roLn}). Also, assuming that $f(\gamma^1)$ is monotonic, the constraint \eqref{eq:20b} lower-bounds~the~feasible values of $\gamma^1$, similar to $\gamma^1$ in \eqref{eq:20c}. The uni-variate~optimization problem of \eqref{eq20:Homo} can be readily solved using off-the-shelf optimization methods. 

The challenge, however, lies in optimizing the activity rates with heterogeneity. Before embracing this problem, it should~be noted that the spectral condition for convergence toward the DFE is specified by either $\mathcal{R}_0 \!=\! \rho(\mathbf{F}\mathbf{V}^{-1})$ or $\lambda_1(\mathbf{F} \!-\! \mathbf{V})$, where the latter is the largest real part of the eigenvalues of the matrix $\mathbf{Q} = \mathbf{F} -\mathbf{V}$ and indicates the exponential rate at which the infection dies out. That~is~to say, $\mathcal{R}_0 \!<\! 1$ if and only if $\lambda_1(\mathbf{Q}) \!<\! 0$. To~mitigate the infection spread in the heterogeneous setting, we thus focus on minimizing the value of $\lambda_1(\mathbf{Q})$ rather than $\mathcal{R}_0$, subject~to the constraint that the activity rates cost less than a predetermined threshold.

To determine $\mathbf{Q}$, we indicate that the DFE in the~heterogeneous case yields the steady-state probabilities $S_i^1 \!=\! \gamma_i^2/(\gamma_i^1 \!+\! \gamma_i^2)$ and $S_i^2 \!=\! \gamma_i^1/(\gamma_i^1 \!+\! \gamma_i^2)$, for any $i \!\in\! V$, where $\gamma_i^1$ and $\gamma_i^2$ are the activation and deactivation rates of node $i$ in the susceptible and carrier states, respectively, and the other state probabilities are equal~to zero. Also, we define vectors $\boldsymbol{\hat{\gamma}}^1 \triangleq (\gamma^1_i)_{i=1}^N$, $\boldsymbol{\hat{\gamma}}^2 \!\triangleq\! (\gamma^2_i)_{i=1}^N$, $\boldsymbol{\hat{\gamma}}^{1,\textrm{I}} \!\triangleq\! (\gamma^{1,\textrm{I}}_i)_{i=1}^N$, $\boldsymbol{\hat{\gamma}}^{2,\textrm{I}} \!\triangleq\! (\gamma^{2,\textrm{I}}_i)_{i=1}^N$, and $\mathbf{\hat{B}} \!\triangleq\! [p_{i,j}\, b_{i,j}]$. In point~of fact, $\mathbf{\hat{B}}$ is the weighted~adjacency matrix of the second layer, which incorporates $p_{i,j}$ as the weight of link $(i,j) \!\in\! E_2$.
\begin{lem}
	\label{lemma3}
	For the network with heterogeneous activity rates, $\mathbf{Q} \triangleq \mathbf{F} + \mathbf{V}^+ - \mathbf{V}^-$ such that:\vspace{-0.2em}
	\begin{align}
		\mathbf{F} &=
					\begin{bmatrix}
						\kappa\, \beta_{\textrm{C}}\, \mathbf{F}_1 & \kappa\, \beta_{\textrm{I}}\, \mathbf{F}_1\\
						\bar{\kappa}\, \beta_{\textrm{C}}\, \mathbf{F}_1 & \bar{\kappa}\, \beta_{\textrm{I}}\, \mathbf{F}_1
					\end{bmatrix}, \nonumber \\
		\mathbf{F}_1 &=
					\begin{bmatrix}
						\mathbf{A}_1 & \mathbf{A}_1\\
						\mathbf{A}_2 & \mathbf{A}_2 + \mathbf{B}_2
					\end{bmatrix}, \nonumber \\
		\mathbf{V}^+ &=
					\begin{bmatrix}
						\mathbf{0} & \mathbf{\Gamma}^2 & \mathbf{0} & \mathbf{0} \\
						\mathbf{\Gamma}^1 & \mathbf{0} & \mathbf{0} & \mathbf{0} \\
						\eta\, \mathbf{I} & \mathbf{0} & \mathbf{0} & \mathbf{\Gamma}^{2,{\textrm{I}}} \\
						\mathbf{0} & \eta\,\mathbf{I} & \mathbf{\Gamma}^{1,{\textrm{I}}} & \mathbf{0}
					\end{bmatrix}, \nonumber \\
		\mathbf{V}^- &= \Diag\left(\mathbf{\Gamma}^1 \!+\! \eta'\, \mathbf{I},\, \mathbf{\Gamma}^2 \!+\! \eta'\, \mathbf{I},\, \mathbf{\Gamma}^{1,{\textrm{I}}} \!+\! \delta\, \mathbf{I},\, \mathbf{\Gamma}^{2,{\textrm{I}}} \!+\! \delta\, \mathbf{I}\right), \nonumber
  \vspace{-0.2em}
	\end{align}
where the sub-matrices $\mathbf{A}_1 \!=\! \Diag\!\left(\boldsymbol{\hat{\gamma}}^2/(\boldsymbol{\hat{\gamma}}^1 \!+\! \boldsymbol{\hat{\gamma}}^2)\right)\,\mathbf{A}$, $\mathbf{A}_2 \!=\! \Diag\!\left(\boldsymbol{\hat{\gamma}}^1/(\boldsymbol{\hat{\gamma}}^1 \!+\! \boldsymbol{\hat{\gamma}}^2)\right)\,\mathbf{A}$, $\mathbf{B}_2 \!=\! \Diag\!\left(\boldsymbol{\hat{\gamma}}^1/(\boldsymbol{\hat{\gamma}}^1 \!+\! \boldsymbol{\hat{\gamma}}^2)\right)\,\mathbf{\hat{B}}$, $\mathbf{\Gamma}^1 \!=\! \Diag\left(\boldsymbol{\hat{\gamma}}^1\right)$, $\mathbf{\Gamma}^2 \!=\! \Diag\left(\boldsymbol{\hat{\gamma}}^2\right)$ , $\mathbf{\Gamma}^{1,{\textrm{I}}} \!=\! \Diag\left(\boldsymbol{\hat{\gamma}}^{1,{\textrm{I}}}\right)$, and $\mathbf{\Gamma}^{2,{\textrm{I}}} \!=\! \Diag\left(\boldsymbol{\hat{\gamma}}^{2,{\textrm{I}}}\right)$.
\end{lem}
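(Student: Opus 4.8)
The plan is to identify $\mathbf{Q}$ with the Jacobian of the infected subsystem of the mean-field dynamics \eqref{eq:mfCTMC} evaluated at the heterogeneous DFE, and then reorganize that Jacobian into the stated block form. First I would collect the four infection-related probability vectors $\mathbf{C}^l(t) = (C_i^l(t))_{i=1}^N$ and $\mathbf{I}^l(t) = (I_i^l(t))_{i=1}^N$, $l \in \{1,2\}$, into a single $4N$-vector $\mathbf{x}(t)$ with block ordering $(\mathbf{C}^1, \mathbf{C}^2, \mathbf{I}^1, \mathbf{I}^2)$, fix that ordering, and substitute the carrier-activity identities $\gamma_i^{1,\textrm{C}} = \gamma_i^1$ and $\gamma_i^{2,\textrm{C}} = \gamma_i^2$ into the $\textrm{C}$- and $\textrm{I}$-equations of \eqref{eq:mfCTMC}.

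The second step is the linearization about $\mathcal{E}_0$. The key point is that $\mathbb{E}[\beta_i^1(t)]$ and $\mathbb{E}[\beta_i^2(t)]$ in \eqref{eq:beta1e_new} and \eqref{eq:beta2e} are already exactly linear in the entries of $\mathbf{x}$ and vanish at the DFE; hence in each product $\kappa\,\mathbb{E}[\beta_i^l(t)]\,S_i^l(t)$ and its $\bar{\kappa}$ counterpart, the only term surviving to first order is the one obtained by freezing $S_i^l$ at its DFE value, $S_i^1|_{\mathcal{E}_0} = \gamma_i^2/(\gamma_i^1+\gamma_i^2)$ and $S_i^2|_{\mathcal{E}_0} = \gamma_i^1/(\gamma_i^1+\gamma_i^2)$. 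Consequently the perturbations of $S_i^1, S_i^2, R_i^1, R_i^2$ do not feed back into the $\mathbf{x}$-subsystem at linear order, so the linearized dynamics close as $\mathbf{x}'(t) = \mathbf{Q}\,\mathbf{x}(t)$, and $\mathbf{Q}$ is exactly the $4N\times 4N$ matrix we must determine.

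The last step is to read off the entries of $\mathbf{Q}$ and sort them into a ``new-infection'' part and a ``transfer'' part. Expanding $\mathbb{E}[\beta_i^1] = \sum_k a_{i,k}\big(\beta_{\textrm{C}}(C_k^1 + C_k^2) + \beta_{\textrm{I}}(I_k^1 + I_k^2)\big)$ and multiplying by $S_i^1|_{\mathcal{E}_0}$ produces the row-block $\mathbf{A}_1 = \Diag\big(\boldsymbol{\hat{\gamma}}^2/(\boldsymbol{\hat{\gamma}}^1 + \boldsymbol{\hat{\gamma}}^2)\big)\mathbf{A}$ acting on each of $\mathbf{C}^1, \mathbf{C}^2$ (with coefficient $\beta_{\textrm{C}}$) and $\mathbf{I}^1, \mathbf{I}^2$ (with coefficient $\beta_{\textrm{I}}$); likewise, expanding $\mathbb{E}[\beta_i^2] = \mathbb{E}[\beta_i^1] + \sum_k b_{i,k} p_{i,k}\big(\beta_{\textrm{C}} C_k^2 + \beta_{\textrm{I}} I_k^2\big)$ and multiplying by $S_i^2|_{\mathcal{E}_0}$ produces $\mathbf{A}_2 = \Diag\big(\boldsymbol{\hat{\gamma}}^1/(\boldsymbol{\hat{\gamma}}^1 + \boldsymbol{\hat{\gamma}}^2)\big)\mathbf{A}$ acting on $\mathbf{C}^1, \mathbf{I}^1$ and $\mathbf{A}_2 + \mathbf{B}_2$ (with $\mathbf{B}_2 = \Diag\big(\boldsymbol{\hat{\gamma}}^1/(\boldsymbol{\hat{\gamma}}^1 + \boldsymbol{\hat{\gamma}}^2)\big)\mathbf{\hat{B}}$) acting on $\mathbf{C}^2, \mathbf{I}^2$; weighting the $\textrm{C}$-rows by $\kappa$ and the $\textrm{I}$-rows by $\bar{\kappa}$ then assembles exactly the claimed $\mathbf{F}$ with inner block $\mathbf{F}_1$. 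The remaining linear couplings in the $\textrm{C}$- and $\textrm{I}$-equations, namely the activation/deactivation exchanges between $\textrm{C}^1$ and $\textrm{C}^2$ and between $\textrm{I}^1$ and $\textrm{I}^2$, the carrier-to-infected flow at rate $\eta$, and the removals at rates $\eta'$ (carrier) and $\delta$ (infected), give entry by entry the off-diagonal gain matrix $\mathbf{V}^+$ and the diagonal loss matrix $\mathbf{V}^-$ of the statement, whence $\mathbf{Q} = \mathbf{F} + \mathbf{V}^+ - \mathbf{V}^-$. The only genuine obstacle is bookkeeping: keeping the $4N$-block ordering consistent throughout and cleanly separating the two new-infection channels (susceptible-to-carrier at weight $\kappa$, susceptible-to-infected at weight $\bar{\kappa}$) from the intra-compartment transfers. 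This is the direct heterogeneous analogue of the $\mathbf{F}$--$\mathbf{V}$ decomposition used in Lemmas~\ref{lemma1} and \ref{lemma2}, and one can verify that $\rho\big(\mathbf{F}(\mathbf{V}^- - \mathbf{V}^+)^{-1}\big)$ reduces to the $\mathcal{R}_0$ of Lemma~\ref{lemma2} under homogeneity with regular graphs.
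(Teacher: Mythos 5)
Your proposal is correct and follows essentially the same route as the paper: the paper's Appendix~\ref{app_E} likewise takes the $4N$ infected compartments $(\mathbf{C}^1,\mathbf{C}^2,\mathbf{I}^1,\mathbf{I}^2)$, computes $\mathbf{F}$, $\mathbf{V}^+$, $\mathbf{V}^-$ as partial derivatives of the new-infection and transfer rates evaluated at $\mathcal{E}_0$ (i.e., the NGM bookkeeping, which is exactly your linearization of the closed infected subsystem with $S_i^l$ frozen at its DFE values), and reads off the blocks $\mathbf{A}_1$, $\mathbf{A}_2$, $\mathbf{B}_2$ the same way. Your explicit justification that the $S$- and $R$-perturbations do not enter at linear order, and the homogeneous-limit consistency check against Lemma~\ref{lemma2}, are sound additions but not a different method.
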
\vspace{-0.4em}
\begin{proof}
	See Appendix~\ref{app_E}.
\end{proof}

Following from Lemma~\ref{lemma3}, the budget-constrained optimization of the activity rates can be expressed as below:\vspace{-0.4em}
\vspace{0em}
\begin{subequations}
	\begin{empheq}[]{align}
		\mathcal{P}_{het_1}:\,\,\, \min_{\{\gamma_i^1\}} \quad & \lambda_1(\mathbf{Q})  \label{eq12a}\\
		\textrm{s.t.} \quad & \sum_{i=1}^N f_i(\gamma_i^1) < \mathcal{C},  \label{eq12b}\\
  							& \underline{\gamma}_i^1 \leq \gamma_i^1 \leq \overline{\gamma}_i^1,\quad \forall i \in V,  \label{eq12c}
  	\end{empheq}
  	\label{eq12}
\end{subequations}
\!\!where $\underline{\gamma}_i^1\, (\overline{\gamma}_i^1)$ denotes the minimum (maximum) activity rate of node $i$ in the susceptible and carrier states and~$f_i(\gamma_i^1)$~is the cost associated with node $i$, which is non-decreasing in the interval $[\underline{\gamma}_i^1, \overline{\gamma}_i^1]$. By applying the well-known Perron-Frobenius theorem, \eqref{eq12} can rightly be expressed as a \emph{geometric programming} (GP)\footnote{In a GP problem, both the objective function and inequality constraints must be posynomials, while equality constraints should be monomials. A function $h:\,\mathbb{R}^n_{++} \!\rightarrow\! \mathbb{R}$ is a \emph{monomial} if it~takes the form $h \!=\! c x_1^{a_1}x_2^{a_2}\ldots x_n^{a_n}$, where $c \!>\! 0$, and $\forall i \!\in\! V,\, a_i \!\in\! \mathbb{R}$. The sum of any monomials is a \emph{posynomial}.} problem \cite{Nowzari2017}, subject to two~prerequisites; firstly, $\mathbf{Q}$ must be a non-negative matrix and secondly, the entries of $\mathbf{Q}$ as well as $f_i(\cdot),~\forall i \!\in\! V$, must be posynomial functions of $\gamma^1_i$'s. However, Lemma~\ref{lemma3} demonstrates~that $\mathbf{Q}$ does not meet the non-negative matrix criteria due to the presence of negative diagonal entries. Additionally, the terms $\gamma_i^1/(\gamma_i^1 \!+\! \gamma_i^2)$ and $\gamma_i^2/(\gamma_i^1 \!+\! \gamma_i^2)$ in $\mathbf{F}_1$ render the entries of $\mathbf{Q}$ as non-posynomial. Next, we introduce alterations to $\mathcal{P}_1$ to enable the exploitation of the~GP framework.

We define matrix $\mathbf{\hat{Q}} \!\triangleq\! \mathbf{Q} + \psi\, \mathbf{I}$, where $\psi$ is a fixed known parameter specifying the maximum over all entries of $\mathbf{V}^-$ defined in Lemma \ref{lemma3}. Hence, we have:\vspace{-0.3em}
\begin{equation}
    \psi = \max\bigl\{\overline{\gamma}_i^1 + \eta', \gamma_i^2 + \eta', \gamma_i^{1,\textrm{I}} + \delta, \gamma_i^{2,\textrm{I}} + \delta \bigr\}_{i=1}^N.
    \vspace{-0.2em}
\end{equation}

Using Lemma~\ref{lemma3}, matrix $\mathbf{Q}$ can now be written as $\mathbf{\hat{Q}} \!=\! \mathbf{F} \!+\! \mathbf{V}^+ \!+ \mathbf{D}$, where $\mathbf{D} \!=\! \psi\, \mathbf{I} -\! \mathbf{V}^-$. The matrix $\mathbf{\hat{Q}}$ satisfies the non-negative matrix condition as both $\mathbf{F} \!+\! \mathbf{V}^+$ and $\mathbf{D}$ are non-negative.~Furthermore, we have $\lambda_1(\mathbf{\hat{Q}}) \!=\! \lambda_1(\mathbf{Q}) \!+\! \psi$, and thus, we can minimize $\lambda_1(\mathbf{\hat{Q}})$ instead of $\lambda_1(\mathbf{Q})$. In this regard, we substitute matrix $\mathbf{Q}$ with $\mathbf{\hat{Q}}$ in \eqref{eq12}. Since $\mathbf{\hat{Q}}$ is non-negative, we apply the Perron-Frobenius theorem\footnote{As per the Perron-Frobenius theorem, since $\hat{\mathcal{Q}}$ is non-negative, it has a real and positive eigenvalue $\lambda$ that is equal to $\lambda_1(\hat{\mathbf{Q}})$ and satisfies the condition $\lambda = \text{inf}\{\lambda'| \hat{\mathbf{Q}} u \preceq \lambda' u ~\text{for some } u  \succ 0 \}$. Therefore, instead of minimizing $\lambda_1(\hat{\mathbf{Q}} )$, $\lambda$ is minimized given that it satisfies $\hat{\mathbf{Q}} u \preceq \lambda u$ for some positive $u$, i.e., the first constraint of $\mathcal{P}_2$.} to express $\mathcal{P}_{het_1}$ as:
\vspace{-0.6em}
\begin{subequations}
	\begin{empheq}[]{align}
	\!\mathcal{P}_{het_2}:\,\, \min_{\{\gamma_i^1, \lambda, u_i\}} \quad & \lambda  \label{eq13a}  \\
		\textrm{s.t.} \qquad & \frac{\sum_{j=1}^N \mathbf{\hat{Q}}_{i,j}\left(\gamma_i^1\right) u_j}{\lambda\, u_i} \leq 1,\quad \forall i \in V,  \label{eq13b}  \\
		\quad & \sum_{i=1}^N f_i(\gamma_i^1) \leq \mathcal{C},  \label{eq13c} \\
  							& \underline{\gamma}_i^1 \leq \gamma_i^1 \leq \overline{\gamma}_i^1,\quad \forall i \in V.  \label{eq13d}
  	\end{empheq}
  	\label{eq13}
  	\vspace{-0.4em}
\end{subequations}

The second prerequisite for the GP framework, with posynomial entries in $\hat{\mathbf{Q}}$, is currently not fulfilled due to the presence of non-posynomial terms $\gamma_i^1/(\gamma_i^1 \!+\! \gamma_i^2)$ and $\gamma_i^2/(\gamma_i^1 \!+\! \gamma_i^2)$. Also, transforming matrix $\mathbf{Q}$ in \eqref{eq12} into a non-negative matrix leads to the emergence of new non-posynomial terms, namely $\psi-\eta'-\gamma^1_i$ for all $i$, which constitute the first $N$ diagonal entries of $\mathbf{D}$. These terms are non-posynomial due to the presence of variables $\gamma_i^1$ with negative coefficients. To ensure that all variable coefficients in the GP problem are positive, we introduce new variables $\zeta^1_i = \psi-\eta'-\gamma^1_i$ and replace the terms $\psi-\eta'-\gamma^1_i$ with~monomials $\zeta^1_i$. However, this introduces an equality~constraint $\gamma^1_i+\zeta^1_i=\psi-\eta'$ into the optimization problem,~which violates the condition that an equality constraint in~a GP problem should be in the form of a monomial equal to~a constant. By introducing~new variables $\zeta^1_i$, the modified formulation of $\mathcal{P}_{het_2}$ is given as follows:\vspace{-0.5em}
\begin{subequations}
	\begin{empheq}[]{align}
		\!\!\!\mathcal{P}_{het_3}\!:\,\, \min_{\{\gamma_i^1, \zeta^1_i, \lambda, u_i\}} \quad & \lambda  \label{eq14a}  \\
		\textrm{s.t.} \qquad\,\, & \frac{\sum_{j=1}^N \mathbf{\hat{Q}}_{i,j}\left(\gamma_i^1\right) u_j}{\lambda\, u_i} \leq 1,\quad \forall i \!\in\! V,\!  \label{eq14b}  \\
		\quad & \sum_{i=1}^N f_i(\gamma_i^1) \leq \mathcal{C},  \label{eq14c} \\
		\quad & \gamma_i^1 + \zeta_i^1 = \psi - \eta',\quad \forall i \!\in\! V,  \label{eq14d} \\
  		& \underline{\gamma}_i^1 \leq \gamma_i^1 \leq \overline{\gamma}_i^1,\quad \forall i \in V.  \label{eq14e}
  	\end{empheq}
  	\label{eq14}
  	\vspace{-0.3em}
\end{subequations}

The problem $\mathcal{P}_{het_3}$ cannot be directly solved using GP optimization because it contains non-posynomial terms in the entries of matrix $\hat{\mathbf{Q}}$ and a posynomial equality constraint in \eqref{eq14d}. However, it is possible to find a local solution using the iterative successive geometric programming (SGP) approach \cite{Boyd2007}. In each iteration of SGP, the posynomials that impede the problem from admitting the GP form are approximated by a monomial near the optimal solution of the previous iteration. This conversion of the optimization problem into~a GP enables it to be solved. Additionally, to ensure that the new optimal point does not deviate from the optimal point of the previous iteration, a trust region constraint is introduced. This is necessary because the monomial approximations are valid only in the vicinity of the previous optimal point. To approximate posynomials with monomials, we employ the lemma outlined below.\vspace{-0.4em}
\begin{lem}
	\label{lemma4}
	Given a posynomial $g(x) \!=\! \sum_{i}{u_i(x)}$ where~$\forall i$, $u_i(x)$ are monomials, we have $g(x) \!\geq\! \tilde{g}(x)=\Pi_i (u_i(x)/\alpha_i)^{\alpha_i}$. Also, if $\alpha_i \!=\! u_i(x_0)/g(x_0)$, then $g(x_0) \!=\! \tilde{g}(x_0)$, and $\tilde{g}(x)$ is the best monomial approximation of $g(x)$ near $x_0$ in the sense of the Taylor approximation \cite{Chiang2007}.
\end{lem}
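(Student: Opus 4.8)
The plan is to derive the stated lemma from the weighted arithmetic--geometric mean (AM--GM) inequality: for positive weights $\alpha_i$ with $\sum_i \alpha_i = 1$ and positive reals $v_i$, one has $\sum_i \alpha_i v_i \ge \prod_i v_i^{\alpha_i}$, with equality if and only if all the $v_i$ coincide. Throughout we take the $\alpha_i$ to be positive and normalized so that $\sum_i \alpha_i = 1$; the prescribed choice $\alpha_i = u_i(x_0)/g(x_0)$ automatically satisfies this, since $\sum_i u_i(x_0) = g(x_0)$ by definition of the posynomial $g$.

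First I would establish the inequality $g(x) \ge \tilde{g}(x)$. Since each monomial $u_i$ is strictly positive on $\mathbb{R}^n_{++}$, we may apply AM--GM with $v_i \triangleq u_i(x)/\alpha_i$, which gives $g(x) = \sum_i u_i(x) = \sum_i \alpha_i v_i \ge \prod_i v_i^{\alpha_i} = \prod_i \big(u_i(x)/\alpha_i\big)^{\alpha_i} = \tilde{g}(x)$, exactly the claimed bound. For the equality statement, observe that with $\alpha_i = u_i(x_0)/g(x_0)$ we get $v_i = u_i(x_0)/\alpha_i = g(x_0)$ for every $i$; the $v_i$ are thus all equal at $x = x_0$, so the AM--GM inequality is tight there and $g(x_0) = \tilde{g}(x_0)$.

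It remains to argue that $\tilde{g}$ is the best \emph{monomial} approximation of $g$ near $x_0$ in the Taylor sense. The natural coordinates here are the logarithmic ones $y_k = \log x_k$, in which a monomial $c\prod_k x_k^{a_k}$ becomes the affine function $\log c + \sum_k a_k y_k$ of $y$, while $\log g$ becomes a log-sum-exp function. Writing $u_i(x) = c_i \prod_k x_k^{a_{ik}}$, differentiation gives $\partial(\log g)/\partial y_k = \sum_i \big(u_i(x)/g(x)\big)\, a_{ik}$, which at $x_0$ equals $\sum_i \alpha_i a_{ik}$; on the other hand $\log \tilde{g} = \sum_i \alpha_i \log u_i - \sum_i \alpha_i \log \alpha_i$ is affine in $y$ with $\partial(\log \tilde{g})/\partial y_k = \sum_i \alpha_i a_{ik}$. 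Hence $\log \tilde{g}$ and $\log g$ share the same value (by the equality just proved) and the same gradient at $y_0 = \log x_0$, i.e.\ $\log \tilde{g}$ is precisely the first-order Taylor expansion of $\log g$ at $y_0$; since a monomial is determined in log-log coordinates by its value and gradient at a point, $\tilde{g}$ is the unique such local monomial fit, which is the assertion.

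I do not expect a genuine obstacle: this is the classical single-monomial condensation bound underpinning the SGP/condensation method. The only points requiring care are bookkeeping ones --- making explicit that the weights must be normalized (which the stated choice ensures) and pinning down the precise sense of ``best Taylor approximation,'' namely first-order agreement of $\log \tilde{g}$ with $\log g$ in logarithmic coordinates; the accompanying gradient computation is the one short routine calculation.
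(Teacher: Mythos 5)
Your argument is correct, and it is essentially the standard condensation proof that the paper itself relies on: the paper gives no proof of Lemma~4, simply deferring to the cited reference, where exactly this weighted AM--GM derivation (with the normalized weights $\alpha_i=u_i(x_0)/g(x_0)$ forcing equality at $x_0$, and first-order agreement of $\log\tilde{g}$ with $\log g$ in logarithmic coordinates) is the classical argument. Your added care about the normalization $\sum_i\alpha_i=1$ and the precise meaning of ``best in the Taylor sense'' is a useful, correct clarification rather than a departure from the intended proof.
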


Building on Lemma~\ref{lemma4}, we can express the optimization~problem in the $k$-th iteration as $\mathcal{P}^{(k)}_{het}$ in the following manner:\vspace{-0.4em}
\begin{subequations}
	\begin{empheq}[]{align}
		\mathcal{P}^{(k)}_{het}: \min_{\{\gamma_i^1, \zeta^1_i, \lambda, u_i\}} \,\,\, & \lambda  \label{eq15a}  \\
		\textrm{s.t.} \qquad & \frac{\sum_{j=1}^N \!\mathbf{\hat{Q}}_{i,j}^{(k)}\left(\gamma_i^1\right) u_j}{\lambda\, u_i} \leq 1,\quad \forall i \in V,  \label{eq15b}  \\
		\quad & \sum_{i=1}^N f_i(\gamma_i^1) \leq \mathcal{C},  \label{eq15c} \\
		\quad & \left(\gamma_i^1\right)^{\alpha_3}\! \left(\zeta_i^1\right)^{\alpha_4} \!=\! \psi - \eta',\quad \forall i \in V,  \label{eq15d} \\
  		& \underline{\gamma}_i^1 \leq \gamma_i^1 \leq \overline{\gamma}_i^1,\quad \forall i \in V,  \label{eq15e}  \\
  		& 1/1.1 \gamma_i^{1,(k-1)} \leq \gamma_i^1 \leq 1.1 \gamma_i^{1,(k-1)},  \label{eq15f}  \\
  		& 1/1.1 \gamma_i^{1,(k-1)} \leq \zeta_i^1 \leq 1.1 \gamma_i^{1,(k-1)},  \label{eq15g}
  	\end{empheq}
  	\label{eq15}
\end{subequations}
\!\!where $\mathbf{\hat{Q}}^{(k)}$ is the same as $\mathbf{\hat{Q}}$ except for the term $\gamma^1_i \!+\! \gamma^2_i$ that is approximated by a monomial using Lemma~\ref{lemma4}. In particular, $g(x) \!=\! \gamma_i^1 \!+\! \gamma_i^2$ is approximated with:\vspace{-0.3em}
\begin{equation}
	\tilde{g}(x) = \left(u_1(x)/\alpha_1\right)^{\alpha_1} \left(u_2(x)/\alpha_2\right)^{\alpha_2},
	\label{eq:gapprox}
 \vspace{-0.3em}
\end{equation}
where $u_1(x) \!=\! x \!=\! \gamma_i^1$ and $u_2(x) \!=\! \gamma_i^2$. Moreover, $\alpha_1$ and~$\alpha_2$ are derived as follows, where $x_0 \!=\! \gamma_i^{1, (k-1)}$ denotes the optimal value of $\gamma^1_i$ in the $(k \!-\!1)$-th iteration (i.e., the solution of $\mathcal{P}^{(k-1)}$):\vspace{-0.3em}
\begin{equation}
  		\begin{cases}
    		\alpha_1 =& \cfrac{u_1(x_0)}{g(x_0)} =  \cfrac{\gamma_i^{1,(k-1)\!}}{\gamma_i^{1,(k-1)\!} \!+\! \gamma_i^2}\,,  \vspace{0.1em} \\
    		\alpha_2 =& \cfrac{u_2(x_0)}{g(x_0)} =  \cfrac{\gamma_i^{2\!}}{\gamma_i^{1,(k-1)\!} \!+\! \gamma_i^2}\,. 
  		\end{cases}
  		\label{eq26}
  		\vspace{-0.3em}
\end{equation}
By substituting \eqref{eq26} in \eqref{eq:gapprox}, $\gamma^1_i \!+\! \gamma^2_i$ can be approximated as:\vspace{-0.3em}
\begin{align}
	\gamma^1_i \!+\! \gamma^2_i &\simeq (\gamma_i^1/\gamma_i^{1,(k-1)})^{\alpha_1},
\label{eq27}
\vspace{-0.4em}
\end{align}
where $\alpha_1$ is a constant given in \eqref{eq26}. In a similar manner, the term $\gamma^1_i \!+\! \zeta^1_i$ in constraint \eqref{eq14d} can be approximated with $(\gamma^1_i/\alpha_3)^{\alpha_3} (\zeta^1_i/\alpha_4)^{\alpha_4}$. Here, $\alpha_3$ and $\alpha_4$ are obtained to be:\vspace{-0.3em}
\begin{equation}
  	\begin{cases}
    	\alpha_3 =& \!\cfrac{\gamma_i^{1,(k-1)\!}}{\gamma_i^{1,(k-1)} \!+ \zeta_i^{1,(k-1)}}\,,  \vspace{0.2em} \\
    	\alpha_4 =& \!\cfrac{\zeta_i^{1,(k-1)\!}}{\gamma_i^{1,(k-1)\!} \!+\! \zeta_i^{1,(k-1)}}\,. 
  	\end{cases}
  	\label{eq28}
  	\vspace{-0.3em}
\end{equation}
where $\zeta^{1,(k-1)}_i$ is the optimal value of $\zeta^{1}_i$ in the $(k \!-\! 1)$-th iteration. Finally, \eqref{eq15f} and \eqref{eq15g} are trust region constraints ensuring that the optimal point in iteration $k$ does not deviate from the preceding optimal point. The proposed SGP method is adaptable to diverse network structures, as it relies solely~on the knowledge of the network graph. While SGP may pose~challenges when directly applied to very large networks due to increased complexity, a viable strategy involves partitioning the large network into its connected components and subsequently applying SGP to each component.\vspace{-0.4em}

\section{Numerical Results and Discussion}
\label{sec:numerical}
\fontdimen2\font=0.50ex
We assess the impact of different key parameters on infection virality and determine the optimal activity rates of susceptible and asymptomatic nodes using MF numerical and simulation results. For epidemic outbreak analysis, we consider activity-homogeneous nodes while for the control mechanism, we evaluate a network~with activity heterogeneity. Simulations are conducted in MATLAB using the Gillespie algorithm \cite{Gillespie1976} adapted in \cite{Aram2020}. The base parametric settings, unless stated~otherwise, are provided in Table~\ref{table1}. Each simulation is averaged over $1000$ runs of~the epidemic process. The SGP is solved using the MOSEK~solver within the CVX package \cite{cvx}.\vspace{-0.8em}

\subsection{Infection Dynamics}
\label{sec:dynamics}
\fontdimen2\font=0.50ex
We synthesize two distinct random regular networks of $N \!=\! 500$ nodes for the $G_1$ and $G_2$ layers with degrees $d_1 \!=\!4$ and $d_2 \!=\! 50$, respectively. In~$G_2$, any link between two active nodes is activated with probability $p \!=\! 0.3$.
\begin{table}[!t]
	\centering
	\caption{Parametric values used for simulation.}
	\vspace{-0.4em}
	\renewcommand{\arraystretch}{1.2}
	\begin{tabular}{|c|c||c|c|}
		\hline
		Parameter & Value & Parameter& Value \\ 
		\hline \hline
		$N$   & $500$ & $(d_1, d_2)$ & $(4, 50)$  \\ 
		$\eta'$  & $0.8$  & $\eta$   & $0.56$  \\ 
        $(\beta_{\,\textrm{I}},\beta_{\,\textrm{C}})$   & $(0.2, 0.1)$ & $\delta$  & $1.5$   \\ 
       $\gamma^{1,\textrm{I}}$   & $0$   & $p$  & $0.3$  \\ 
       \hline
	\end{tabular}
	\label{table1}
 \vspace{-0.2em}
\end{table}
\begin{figure}[!t]
	\centering
	\includegraphics[width=0.9\columnwidth]{./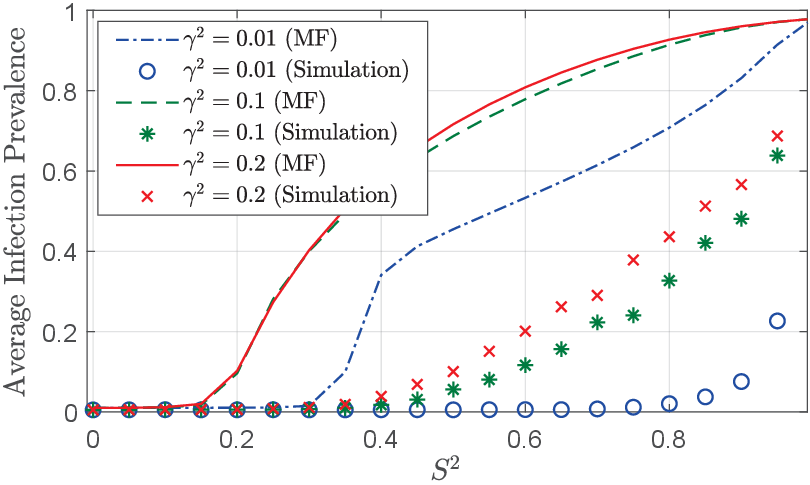}
	\vspace{-1em}
	\caption{Average infection prevalence versus the activity probability of~susceptible and carrier nodes ($S^2$) in steady-state for varying $\gamma^2$ values.}
	\label{fig3n}
    \vspace{-0.3em}
\end{figure}

\figurename{~\ref{fig3n}} shows the average infection prevalence, defined as~the steady-state probability of node recovery, in terms of the activity probability of susceptible and asymptomatic nodes ($S^2$) for~the original SCIR model (i.e., $\kappa \!=\! 1$). The results indicate that~the exact simulations are consistently upper-bounded by our MF approximation, which is in line with existing literature \cite{Aram2020}. Thus, the MF approximation invariably underestimates the actual network epidemic threshold. When the inactivity rate of susceptible and carrier nodes (${\gamma^2}$) remains constant, increasing $S^2$ generates more connections in $G_2$ on average as nodes spend more time in the active state, thus facilitating wider infection transmission in the second layer. Conversely, for a fixed $S^2$, as ${\gamma^2}$ increases, the rate $\gamma^1$ proportionally increases as well since $S^2 \!=\! \gamma^1/(\gamma^1 \!+\! \gamma^2)$. This implies that the nodes change their~activity states more~frequently, while their activity probability remains unchanged. Consequently, more active links are introduced in the second layer, leading to a wider infection spread. However, the rise in infection prevalence is less pronounced when $\gamma^2$ is higher because the activity rate $\gamma^1 \!=\! \gamma^2S^2/(1 \!-\! S^2)$ is also sufficiently high (with $S^2$ remaining constant), meaning that almost all possible connections in the network~become active.

\figurename{~\ref{fig4n}} plots the time evolution of the average population size in each epidemic state for $\eta'\!=\! 0.1$ and $\eta' \!=\! 0.4$,~respectively. The fixed ratio of $\eta/\eta' \!=\! 0.7$ is assumed,~where $\mathbb{E}\!\left[N_{\textrm{X}}\right]$ represents the mean number of nodes in the epidemic state $\textrm{X} \!\in\! \{\textrm{S}, \textrm{C}, \textrm{I}, \textrm{R}\}$. As apparent in \figurename{~\ref{fig4n}}(a), we observe that wider transmission occurs when $\eta'$ rates are lower because nodes tend to spend more time on average in the carrier state. This allows for a greater progression of the infection throughout the network. On the other hand, when $\eta'$ grows to $0.4$ in \figurename{~\ref{fig4n}}(b), the transition of nodes from carriers to infected transpires at a faster rate of $\eta \!=\! 0.28$. As a result, such nodes tend to curtail their activities, leading to a decrease in the infection spread.
\begin{figure}[!t]
	\centering
	\includegraphics[width=0.9\columnwidth]{./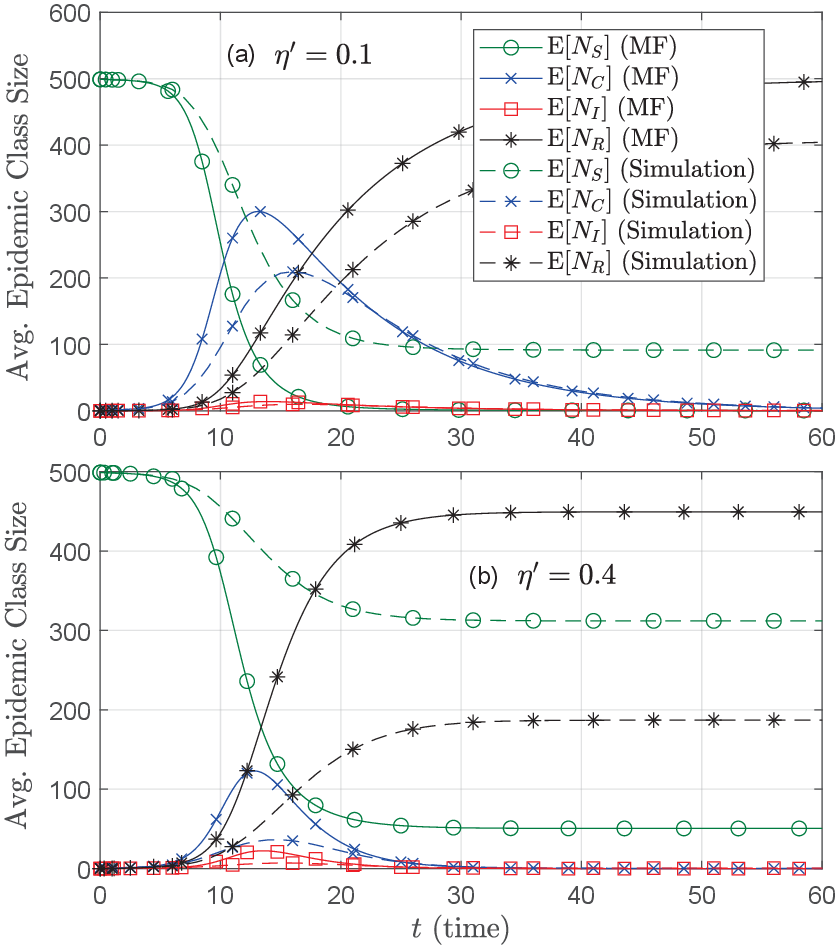}
	\vspace{-1em}
	\caption{Evolution of the average size of all epidemic compartments over time for varying rates at which carriers either get infected or recover ($\eta'$).}
	\label{fig4n}
    \vspace{-0.4em}
\end{figure}
\begin{figure}[t]
	\centering
	\includegraphics[width=0.9\columnwidth]{./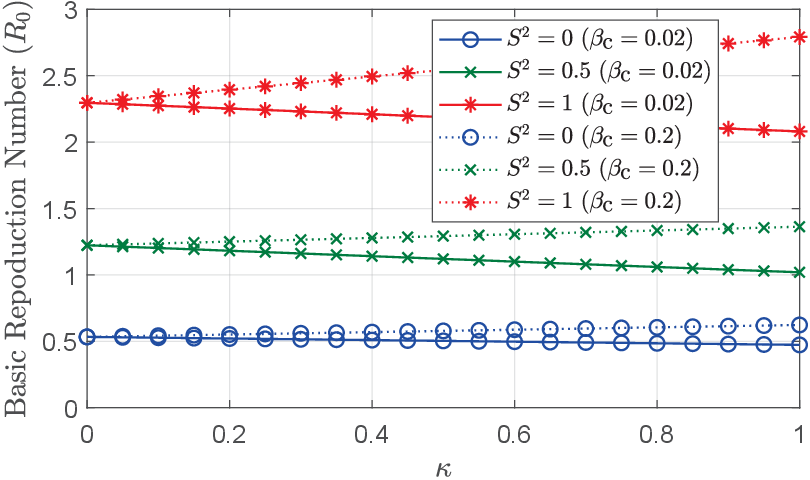}
	\vspace{-1em}
	\caption{Basic reproduction ratio ($\mathcal{R}_0$) versus $\kappa$ for varying $S^2$ and $\beta_{\textrm{C}}$ values.}
	\label{fig5n}
    \vspace{-0.3em}
\end{figure} 

\figurename{~\ref{fig5n}} portrays the relationship between the basic reproduction number ($\mathcal{R}_0$) and $\kappa$ in the extended SCIR model, with varying~$S^2$ and $\beta_{\textrm{C}}$ values. The figure shows that when the transmission rate $\beta_{\textrm{C}}$ is sufficiently low (e.g., $\beta_{\,\textrm{C}} \!=\! 0.02$), increasing $\kappa$ leads to~a decrease in $\mathcal{R}_0$. This decrease results from more susceptible nodes becoming carriers rather than infected, leading to a lower disease transmission probability compared to infected nodes. Conversely, for higher $\beta_{\textrm{C}}$ (around $0.2$), a larger proportion of carrier nodes contribute to higher disease prevalence since they exhibit the same activity level as susceptible nodes. Moreover, \figurename{~\ref{fig5n}} also reveals that the $\mathcal{R}_0$ threshold increases with $S^2$ as~more connections are formed in the $G_2$ layer.

To numerically verify Propositions~\ref{prop1} and \ref{prop2}, \figurename{~\ref{fig7n}} plots the average infection prevalence of the original and extended SCIR models as a function of $S^2$ for different values of $(d_1,d_2)$. Each value of $(d_1,d_2)$ corresponds to specific values of $R^{(1)}_0$ and $R^{(1)\!}_0+R^{(2)}_0$ ($\tilde{R}^{(1)}_0$ and $\tilde{R}^{(1)\!}_0+\tilde{R}^{(2)}_0$). We observe that when $R^{(1)}_0 \!+\! R^{(2)}_0 < 1$, as in the case of $(d_1,d_2) \!=\! (3,6)$, the disease does not~spread in the network for any values of $S^2$. Conversely, when $R^{(1)}_0 \!>\! 1$, the disease prevails for all values of $S^2$. In the case of $(d_1,d_2) \!=\! (3,12)$, where $R^{(1)}_0 \!+\! R^{(2)}_0 \!>\! 1$ and $R^{(1)}_0 \!<\! 1$, a critical value~$S^{2*}$ exists, corresponding to a specific~$\gamma^{1*}$, beyond which the disease prevails for $S^2 \!>\! S^{2*}$. The same observations hold true in the extended case (i.e., $\kappa \!=\! 0.6$), thereby confirming the findings stated in the two propositions.\vspace{-0.6em}
\begin{figure}[!t]
	\centering
	\includegraphics[width=0.9\columnwidth]{./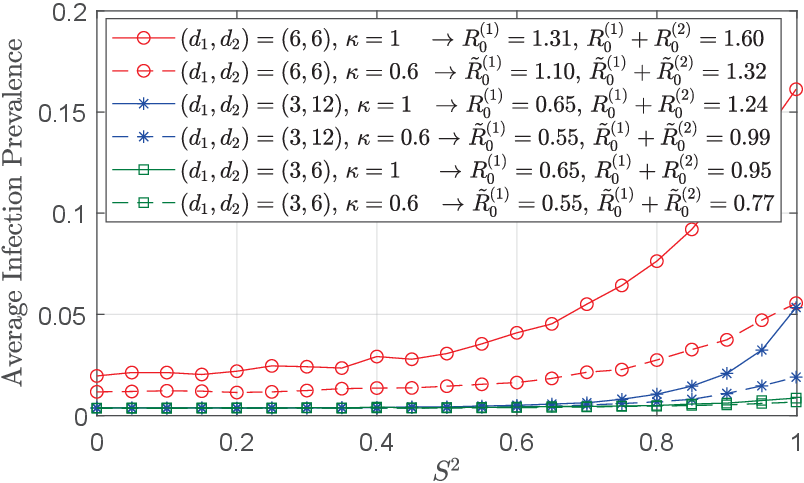}
	\vspace{-1em}
	\caption{Average infection prevalence versus the activity probability of carrier and susceptible nodes ($S^2$).}
	\label{fig7n}
 \vspace{-0.3em}
\end{figure}

\subsection{Activity Control Policy Assessment}
\label{sec:controlpolicy}
\fontdimen2\font=0.52ex
We now investigate the effect of limited cost on activity rates of the nodes and disease prevalence. We assume that $\gamma^2 \!=\! 0.2$,~$\beta_{\,\textrm{C}} \!=\! 0.15$, $\delta \!=\! 0.2$, and $\kappa \!=\! 1$. As discussed~in~Section~\ref{sec:dynamics}, $G_1$ remains a random regular graph. Moreover, in the second layer,~we assume $p_{i,j} \!=\! p_i\,p_j$, where $p_i$ is the probability that node $i$ decides to create a link with any of its direct active neighbors. For a connection between nodes $i$ and $j$ to become active in $G_2$, both nodes must independently decide to activate their connection with probabilities $p_i$ and $p_j$, respectively. Moreover, each node belongs to one of three classes, where the nodes in class $l' \!\in\! \{1,2,3\}$ form links with their active neighbors with a specific probability $p_{l'}$. We set $p_1 \!=\! 0.1$, $p_2 \!=\! 0.2$, and $p_3 \!=\! 0.8$ for demonstration. The number of nodes in each class, denoted as $N_{l'}$, is $N_1 \!=\! N_3 \!=\! \lfloor N/6 \rfloor$, and $N_2 \!=\! N \!-\! N_1 \!-\! N_3$. All nodes have a minimum activity rate of $\underline{\gamma}_i^1 \!=\! 0.08$ and a maximum activity rate of $\overline{\gamma}_i^1 \!=\! 0.3$.  The cost function related to node $i$ is defined as $1/\gamma^1_i$, resulting in a budget range of $\frac{N}{0.3} \!\leq \mathcal{C} \leq\! \frac{N}{0.08}$.

\figurename{~\ref{fig6nn}}(a) and \figurename{~\ref{fig6nn}}(b) depict the optimal activity rates of~nodes relative to their average~degrees, calculated as $\sum_{j\in N} a_{i,j}+p_{i,j} b_{i,j}$, across three distinct classes for network sizes $N=200$ and~$N=400$, respectively.~Here, layer $G_2$ of the synthesized network is modeled as a Barab\'{a}si-Albert (BA) graph with an~initial seed size of $20$ nodes and a preferential attachment~parameter of $10$, resulting in a heterogeneous degree distribution.~The initial $20$ seed nodes are specifically chosen from the~second class.~The available budget per node is almost equal to $6$,~leading to a total budget of $1216$ and $2432$ for $N \!=\! 200$ and $N \!=\! 400$, respectively. It is evident that these~$20$ nodes with the largest~average degrees are assigned the lowest activity rates in both cases of $N \!=\! 200$ and $N \!=\! 400$.~Furthermore, nodes in the third~class are also~assigned low activity rates due to their linkage to neighbors with~the highest probabilities (i.e., $0.8$), resulting in relatively higher average degrees. Nodes from the first class and non-initial seed nodes from the second class are assigned activity rates ranging from the highest possible value (i.e., $0.3$) to $0.11$, based on their average degrees.~In general, there is a decrease in activity rates as nodes exhibit higher average degrees.~In \figurename{~\ref{fig6nn}}(b), more number of third-class nodes are assigned the lowest activity rates for $N \!=\! 400$, as the average~degrees of these nodes also increase with the network size.
\begin{figure}[!t]
	\centering
	\includegraphics[width=1.0\columnwidth]{./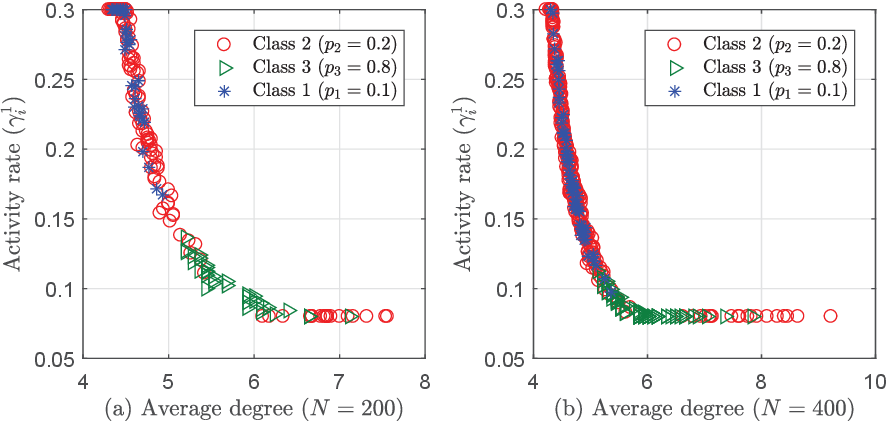}
	\vspace{-1.9em}
	\caption{Optimal activity rate ($\gamma^1_i$) versus the average node degree.}
	\label{fig6nn}
\end{figure}

\begin{figure}[!t]
	\centering
	\includegraphics[width=0.938\columnwidth]{./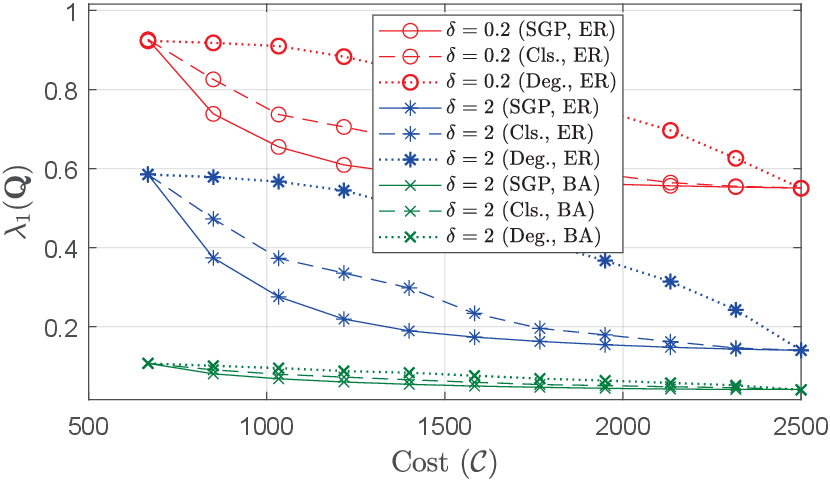}
	\vspace{-1em}
	\caption{Spectral radius ($\lambda_1(\mathbf{Q})$) versus budget ($\mathcal{C}$) under varying $\delta$ values.}
	\label{fig8n}
    \vspace{-0.6em}
\end{figure}

\figurename{~\ref{fig8n}} shows the optimal spectral abscissa $\lambda_1(\mathbf{Q})$ in~terms of the available budget ($\mathcal{C}$) for $\gamma^{1,\textrm{I}} \!=\! 0.2$ and under three~allocation policies: SGP, degree centrality-based (Deg.), and closeness centrality-based (Cls.) allocations. Layer $G_2$ of the synthesized network is modeled as an Erd\"{o}s-Renyi (ER) graph with~a~connection probability of $0.2$ as well as the BA network of \figurename{~\ref{fig6nn}}. In Deg., nodes with higher average degrees receive budget~allocation first, while~Cls. prioritizes nodes with higher closeness measure values, indicating their proximity to other nodes in the network. 
The figure shows that, for both low and high budget values,~all policies exhibit similar performance, assigning the highest and lowest activity rates to nodes, respectively. However, for intermediate cost values, SGP outperforms~Cls. by up~to $13\%$ and $36\%$ for $\delta \!=\! 0.2$ and $\delta \!=\! 2$, respectively.~The corresponding improvements over Deg. are nearly $30\%$ and $62\%$. This is because SGP considers both the temporal and structural properties of the network by utilizing~the adjacency matrix and activity rates in \eqref{eq15}, unlike the centrality methods. The higher improvement at $\delta \!=\! 2$ is attributed to the significantly greater cure rate of infected nodes compared to the activity rates, resulting in lower epidemic spread for the same activity rates at the same cost. Another observation to be made from \figurename{~\ref{fig8n}} is that the second layer of the network is modeled using a BA graph instead of an ER graph. The BA graph has an initial seed size of $20$, an attachment parameter of $10$, and $\delta \!=\! 2$. Comparing SGP to Cls. and Deg. policies,~SGP shows improvements of up to $17\%$ and $35\%$, respectively. Furthermore, with $\delta=2$, the observed improvement in the ER network compared to the BA~network is linked to their density disparities under the given parameters.~The higher density of the ER network amplifies the influence of temporal links in disease propagation.~Hence, by accounting for node activity patterns to optimize activity rates, SPG demonstrates superior performance compared to other methods.\vspace{-0.6em}

\subsection{Evaluation using Real-World Dataset}
\label{sec:datset}
\fontdimen2\font=0.50ex
We validate our findings using the Ebola virus disease (EVD) dataset \cite{Riad2019}, which covers $23$ districts in Uganda and comprises $11{,}056$ nodes. It consists of a permanent layer that represents connections within households and a temporal layer incorporating movement patterns of individuals entering district borders and heading towards the capital city. Each moving node is considered active. Following \cite{Riad2019}, we assume link construction probabilities of~$p_{i,j} \!=\! 0.7$, and set $\gamma^{1,\textrm{I}} \!=\! 0$ as in the previous section. We selected a connected component of the first layer with $572$ nodes and $2{,}848$ links for simplicity. Consequently, the connections of the corresponding nodes in the second layer form $G_2$, consisting of $828$ links. For disease modeling, we use parameters estimated for COVID-19 in \cite{Prem2020}, i.e., we set the probabilities of a carrier node becoming infectious or recovered ($\eta'$) and of an infected node recovering ($\delta$) to be $0.14$ and $0.18$, respectively. Following \cite{Kuzdeuov2020}, we also set $\beta_{\textrm{C}} \!=\! 0.7 \beta_{\textrm{I}}$ and $\eta \!=\! 0.7 \eta'$.

\figurename{~\ref{fig9n}} depicts the disease prevalence against $S^2$ in the EVD dataset. For $\beta_{\textrm{I}} \!=\! 0.01$, the disease dies out for all~values of $S^2$. Interestingly, for $\beta_{\textrm{I}} \!=\! 0.03$, the disease spread increases with $S^2$ similar to the observation in \figurename{~\ref{fig3n}}. Additionally, the prevalence generally increases with $\gamma^2$ for any fixed value of $S^2$ due to more frequent activation and deactivation of nodes, thus leading to more activated links in the second layer. However, for high values of $S^2$, the prevalence decreases with $\gamma^2$ instead. In such cases, the shorter duration of active links compared to the disease transmission time ($1/\beta_{\textrm{I}}$) leads to frequent switching between activity states, reducing the likelihood of disease transmission.\vspace{-0.6em}
\begin{figure}[!t]
	\centering
	\includegraphics[width=0.9\columnwidth]{./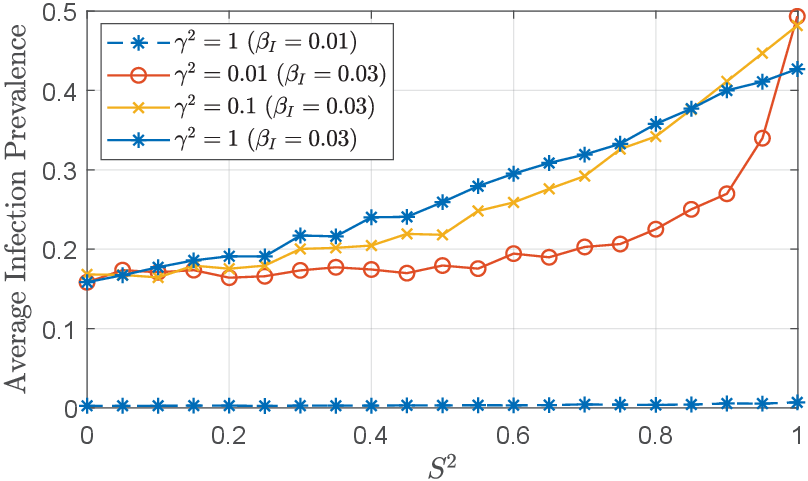}
	\vspace{-1em}
	\caption{Average infection prevalence versus the activity probability of~susceptible and carrier nodes ($S^2$) for varying $\gamma^2$ values in the EVD dataset.}
	\label{fig9n}
    \vspace{-0.3em}
\end{figure}

\section{Conclusion}
\label{sec:conclusion}
\fontdimen2\font=0.50ex
In this paper, the impact of asymptomatic carriers on the~epidemic threshold for a stochastic SCIR model over a two-layer temporal social network was investigated. The networked model comprised distinct dynamic and static connection layers, with probabilistic interactions between active nodes in the former layer. Sufficient conditions were derived for the infection to~die out or persist in a network with activity pattern homogeneity. Furthermore, an optimal activity control mechanism was formulated to contain the activity rates of susceptible and asymptomatic nodes, with their activity profiles being indistinguishable. Their activity rates were optimized by minimizing the spectral radius of the proposed mean-field approximated network model under limited budgetary constraints. Results from simulation experiments showed that the proposed SGP approximation outperformed the degree and closeness centrality-based benchmarks. Promising future research directions include modeling the effect of asymptomatic carriers on disease mutation and devising learning-based control policies that are both scalable and effective.\vspace{-0.5em}

\appendices
\section{Proof of Lemma~\ref{lemma1}}
\label{app_A}
\fontdimen2\font=0.50ex
To compute $\mathcal{R}_0$ for \eqref{eq:MFhomo}, the matrices $\mathbf{F}$ and $\mathbf{V}$ are given as:\vspace{-0.2em}
\begin{align}
	\mathbf{F} \!=\!
			\begin{bmatrix}
				\beta_{\textrm{C}}\, \mathbf{F}_{\!1} & \!\!\beta_{\textrm{I}}\, \mathbf{F}_{\!1} \\
				\mathbf{0} & \!\!\mathbf{0}
			\end{bmatrix}\! \nonumber \text{\,and\,} 
	\mathbf{V} \!=\!
			\begin{bmatrix}
				\gamma^{1} \!+\! \eta' & \!\!\!-\gamma^{2} & \!\!\!0 & \!\!\!0 \\
				-\gamma^{1} & \!\!\!\gamma^{2} \!+\! \eta' & \!\!\!0 & \!\!\!0 \\
				-\eta & \!\!\!0 & \!\!\!\gamma^{1,\textrm{I}} \!+\! \delta \!\!\!& -\gamma^{2,\textrm{I}\!} \\
				0 & \!\!\!-\eta & \!\!\!-\gamma^{1,\textrm{I}\!} & \!\!\!\gamma^{2,\textrm{I}} \!+\! \delta
			\end{bmatrix}, \nonumber
   \vspace{-0.2em}
\end{align}
where $\mathbf{F}_{\!1} \!=\!
			\begin{bmatrix}
				d_1\, S^1 & \!\!d_1\, S^1\\
				d_1\, S^2 & \!\!(p\,d_2 \!+\! d_1)\, S^2
			\end{bmatrix}$. Rewriting $\mathbf{V}$ in terms of $\mathbf{V}_{\!1}$ and $\mathbf{V}_{\!2}$ yields $\mathbf{V} \!=\!
					\begin{bmatrix}
						\mathbf{V}_{\!1} & \!\mathbf{0}  \\
						-\eta\, \mathbf{I} & \!\mathbf{V}_{\!2}
					\end{bmatrix}$,
where $\mathbf{V}_{\!1} \!=\!
					\begin{bmatrix}
						\gamma^{1\!} \!+\! \eta'\!\! & \!\!-\gamma^2  \\
						-\gamma^{1}\!\! & \!\!\gamma^{2\!} \!+\! \eta'
					\end{bmatrix}$
					and 
					$\mathbf{V}_{\!2} \!=\!
					\begin{bmatrix}
						\gamma^{1,\textrm{I}\!} \!+\! \delta \!&\! -\gamma^{2,\textrm{I}}  \\
						-\gamma^{1,\textrm{I}} \!&\! \gamma^{2,\textrm{I}\!} \!+\! \delta
					\end{bmatrix}$. 
Taking the inverse of $\mathbf{V}$, we get $\mathbf{V}^{-1} \!=\!
					\begin{bmatrix}
						\mathbf{V}_{\!1}^{-1} & \!\!\mathbf{0}  \\
						\eta\,\mathbf{V}_{\!2}^{-1}\mathbf{V}_{\!1}^{-1} & \!\!\mathbf{V}_{\!2}^{-1}
					\end{bmatrix}$.
Hence, the product $\mathbf{F \cdot V}^{-1}$ is:\vspace{-0.2em}
\begin{align}
		\mathbf{F\cdot V}^{-1} \!=\!
					\begin{bmatrix}
						\mathbf{F}_1\cdot \left(\beta_C\, \mathbf{I} \!+\! \eta\,\beta\, \mathbf{V}_2^{-1}\right)\cdot \mathbf{V}_1^{-1} & \beta\,\mathbf{F}_1\cdot \mathbf{V}_2^{-1}  \\
						\mathbf{0} & \mathbf{0}
					\end{bmatrix}. \nonumber
     \vspace{-0.2em}
\end{align}
To obtain $\rho(\mathbf{F \!\cdot\! V}^{-1})$, we need to calculate the spectral~radius or, equivalently, the eigenvalues of $\mathbf{L} \!=\! \mathbf{F}_1 \cdot (\beta_{\textrm{C}}, \mathbf{I} + \eta\,\beta\,\mathbf{V}_{\!2}^{-1}) \cdot \mathbf{V}_{\!1}^{-1}$.\vspace{-0.4em}

\section{Proof of Proposition~\ref{prop1}}
\label{app_B}
\fontdimen2\font=0.50ex
Lemma~\ref{lemma1} introduces the spectral radius of matrix $\mathbf{L}$, defined as the largest absolute value of its eigenvalues, which can be found by solving the equation $\det\left(\mathbf{L} - \lambda \mathbf{I}\right) \!=\! \lambda^2 - \Tr\left(\mathbf{L}\right) + \det\left(\mathbf{L}\right)$. To obtain these eigenvalues, we first need to determine the entries of matrix $\mathbf{L}$, which after some mathematical manipulations can be expressed as:
\begin{equation}
	\begin{cases}
		\mathbf{L}_{1,1} \!=\! \mathbf{L}_{1,2} \!=\! S^1 R_0^{(1)}, \vspace{0.2em} \\
		\mathbf{L}_{2,1} \!=\! S^2 \Big(\!R_0^{(1)} \!+\! R_0^{(2)} \!-\! \cfrac{p\, d_2\, \eta\, (\eta' \!+\! \gamma^{2})}{\eta' (\eta' \!+\! \gamma^{1} \!+\! \gamma^{2})}\,w \Big),  \vspace{0.2em} \\
		\mathbf{L}_{2,2} \!=\! S^2 \Big(\!R_0^{(1)} \!+\! R_0^{(2)} \!-\! \cfrac{p\, d_2\, \eta\, \gamma^{2}}{\eta' (\eta' \!+\! \gamma^{1} \!+\! \gamma^{2})} \,w \Big), 
	\end{cases}
	\label{eq10}
\end{equation}
where $w \!=\! \beta_{\textrm{C}} / \eta+\beta_{\textrm{I}}/( \delta+\gamma^{1,\textrm{I}}+\gamma^{2,\textrm{I}})$. In what follows, we use \eqref{eq10} to show that the eigenvalues of $\mathbf{L}$ are real. From \eqref{eq10}, we arrive at the following expressions for $\Tr\left(\mathbf{L}\right)$ and $\det\left(\mathbf{L}\right)$:
\vspace{0em}
\begin{equation}
	\!\!\begin{cases}
		\!\Tr(\mathbf{L}) \!=\! R_0^{(1)} \!+\! S^2 \overbrace{\Big(\!R_0^{(2)} \!-\! \cfrac{p\,d_2\,\eta\,\gamma^{2}}{\eta'\,(\eta' \!+\! \gamma^{1} \!+\! \gamma^{2})}\,w\Big)}^{z_1}, \vspace{0.2em}   \\
		\!\det(\mathbf{L}) = S^1\,S^2\,R_0^{(1)}\,\underbrace{\cfrac{p\,d_2\,\eta}{\eta' + \gamma^{1} + \gamma^{2}}\,w}_{z_2}.  
	\end{cases}
	\label{eq16n}
 \vspace{-0.2em}
\end{equation}
Rewriting $\Tr(\mathbf{L})$ and $\det(\mathbf{L})$ in terms of variables $z_1$ and $z_2$, respectively, reduces \eqref{eq16n} to:\vspace{-0.2em}
\begin{equation}
	\begin{cases}
		\Tr(\mathbf{L}) = R_0^{(1)} + S^2 z_1, \vspace{0.2em}   \\
		\det(\mathbf{L}) = S^1\,S^2\,R_0^{(1)}z_2.  
	\end{cases}
 \vspace{-0.2em}
\end{equation}
Subsequently, the discriminant of $\text{det}(\mathbf{L} \!-\! \lambda\mathbf{I})$, given as $\Delta \!\triangleq\! \Tr(\mathbf{L})^2 \!-\! 4\det(\mathbf{L})$, can be expressed as:\vspace{-0.2em}
\begin{equation}
	\Delta \!=\! (R_0^{(1)} \!+\! S^2 z_1 \!-\! 2 S^1S^2\,z_2)^2 + 4S^1(S^2)^2 z_2(z_1 \!-\! S^1 z_2).
 \vspace{-0.2em}
\end{equation}
For $\Delta \geq 0$, we need to prove that $z_1 \geq z_2$. To do so, we can recast the inequality to $w\,\gamma^{1} + \beta_{\textrm{I}}\,\gamma^{1,\textrm{I}}(\eta' + \gamma^{1} + \gamma^{2})/\big(\delta(\delta + \gamma^{1,\textrm{I}} + \gamma^{2,\textrm{I}})\big) \geq 0$ by substituting the values for $z_1$ and $z_2$, and using $R_0^{(2)}$ from \eqref{eq:R02}. Since $\Delta$ is always positive, we establish that $\rho(\mathbf{L})$ is equal to:\vspace{-0.2em}
\begin{equation}
	\big(\Tr(\mathbf{L}) + \sqrt{\Delta}\big)/2 = \left(\!\Tr(\mathbf{L}) + \sqrt{\Tr(\mathbf{L})^2 \!-\! 4\det(\mathbf{L})}\right)\!/2.
	\label{eq:roLn}
 \vspace{-0.2em}
\end{equation}

We now prove the three stability conditions in the proposition.
\begin{itemize}
	\item \textit{Case I}: Letting $\rho(\mathbf{L}) \!<\! 1$ yields the inequality $\Tr(\mathbf{L}) \!-\! \det(\mathbf{L}) \!<\! 1$, which can further be rewritten as:\vspace{-0.2em}
	\begin{equation}
		R_0^{(1)} \!+\! S^2 R_0^{(2)} \!-\! 1 < \cfrac{S^2\, p\, d_2\, \eta}{\eta' \!+\! \gamma^{1} \!+\! \gamma^{2}}\, w\Big(\cfrac{\gamma^2}{\eta'} + S^1\,R_0^{(1)\!}\Big).
		\label{eq17}
  \vspace{-0.2em}
	\end{equation}
It is important to observe that $S^1$ and $S^2$ in \eqref{eq17} depend on $\gamma^1$. Also, note that the right-hand side of the inequality is greater than zero for all possible values of $\gamma^1$. Consequently, if the condition $R_0^{(1)} + R_0^{(2)} < 1$ holds, then the left-hand side of the inequality will be negative. As a result, the inequality~will hold for any $\gamma^1 > 0$. This implies that the DFE is stable at all times.
	\item \textit{Case II}: To prove this scenario, it suffices to demonstrate that $\mathcal{R}_0 \!\geq\! R_0^{(1)}$ for any $\gamma^1$ value. Substituting $\mathcal{R}_0$ with $(\Tr(\mathbf{L}) \!+\! \sqrt{\det(\mathbf{L})^2 - 4 \det(\mathbf{L})},)/2$ allows us to express the inequality $\mathcal{R}_0 \!\geq\! R_0^{(1)}$ as follows:\vspace{-0.2em}
	\begin{equation}
		R_0^{(1)} \Tr(\mathbf{L}) - \det(\mathbf{L}) \geq \left(R_0^{(1)}\right)^2.
		\label{eq18}
  \vspace{-0.2em}
	\end{equation}
	Building on the fact that $\text{det}(\mathbf{L}-\lambda\mathbf{I}) > 0$, \eqref{eq18} can be simplified into:\vspace{-0.2em}
	\begin{equation}
		R_0^{(2)} - \cfrac{p\,d_2\,\eta\,\gamma^2}{\eta'(\eta' \!+\! \gamma^{1} \!+\! \gamma^{2})}\,w \geq \cfrac{S^1\,p\,d_2\,\eta}{\eta' \!+\! \gamma^{1} \!+\! \gamma^{2}}\,w.
		\label{eq19}
  \vspace{-0.2em}
	\end{equation}
	Moreover, by exploiting the fact that $R_0^{(2)} \geq d_2\,p\,\eta\,w/\eta'$, \eqref{eq19} holds only if the condition below is satisfied:\vspace{-0.2em}
	\begin{equation}  
		\frac{d_2\,p\,\eta}{\eta'}w-\frac{d_2\,p\,\eta\, \gamma^2}{\eta'(\eta' \!+\! \gamma^{1} \!+\! \gamma^{2})}\,w \geq \frac{S^1\,d_2\,p\,\eta}{\eta' \!+\! \gamma^{1} \!+\! \gamma^{2}}\,w.
		\label{eq20}
  \vspace{-0.2em}
	\end{equation}
After some algebraic manipulations, \eqref{eq20} reduces to $\eta' + \gamma^1 \geq S^1 \eta'$, which is always true. Note that the equality holds when $\gamma^1 = 0$ and thus, $S^1 = 1$.
	\item \textit{Case III}: By expressing \eqref{eq17} in terms of $\gamma^1$, we arrive at a third-order inequality of the form $a(\gamma^1)^3 + b(\gamma^1)^2 + c \gamma^1 + d < 0$, where the coefficients are specified as:\vspace{-0.2em}
	\begin{equation}
		\begin{cases}
			a &= R_0^{(1)} + R_0^{(2)} - 1, \vspace{0.2em} \\
			b &= \eta'(R_0^{(1)} + R_0^{(2)} - 1) + \gamma^2(3R_0^{(1)} + R_0^{(2)} - 3), \vspace{0.2em} \\
			c &= \gamma^2(R_0^{(1)} - 1) (2\eta' + 3\gamma^2 - \eta' R_0^{(2)}), \vspace{0.2em} \\
			d &= (\gamma^2)^2(R_0^{(1)} - 1) (\eta' + \gamma^2). \nonumber
		\end{cases}
  \vspace{-0.2em}
	\end{equation}
It is common knowledge that the product of the roots of the above third-degree equation is equal to $-a/d$ (i.e., $(R_0^{(1)} \!+\! R_0^{(2)} \!-\! 1)/\big((\gamma^2)^2(1 \!-\! R_0^{(1)}) (\eta' \!+\! \gamma^2)\big)$). When $R_0^{(1)} \!<\! 1$ and $R_0^{(1)} \!+\! R_0^{(2)} \!>\! 1$, this value is positive. This conclusion clearly implies that one of the three roots is a positive real number. Specifically, if $R_0^{(1)} \!<\! 1$ and $R_0^{(1)} \!+\! R_0^{(2)} \!>\! 1$, there exists some $\gamma^{1*}$ for which $a(\gamma^1)^3 + b(\gamma^1)^2 + c \gamma^1 + d < 0$ if $\gamma^1 < \gamma^{1*}$. As a result, the DFE is stable.
\end{itemize}
\vspace{-0.5em}

\section{Proof of Lemma~\ref{lemma2}}
\label{app_C}
\fontdimen2\font=0.50ex
With respect to the MF equations given in \eqref{eq:MFhomo}, the $\mathbf{F}$ matrix for the extended SCIR model can be written as:\vspace{-0.2em}
\begin{align}
	\mathbf{F} &=
		\begin{bmatrix}
			\kappa\, \beta_{\textrm{C}}\, \mathbf{F}_1 & \kappa\, \beta_{\textrm{I}}\, \mathbf{F}_1\\
			\bar{\kappa}\, \beta_{\textrm{C}}\, \mathbf{F}_1 & \bar{\kappa}\, \beta_{\textrm{I}}\, \mathbf{F}_1
		\end{bmatrix}. \nonumber 
  \vspace{-0.2em}
\end{align}
The matrix $\mathbf{V}$ is obtained in a manner similar to the original SCIR model given in Appendix~\ref{app_A}. The resulting outcome of the product $\mathbf{F \cdot V}^{-1}$ can be expressed as:\vspace{-0.2em}
\begin{align}
	\mathbf{F\cdot V}^{-1} &=
		\begin{bmatrix}
			\kappa\, \mathbf{L} & \kappa\, \mathbf{U}\\
			\bar{\kappa}\, \mathbf{L} & \bar{\kappa}\, \mathbf{U}
		\end{bmatrix}, \nonumber 
  \vspace{-0.2em}
\end{align}
where $\mathbf{L} \!=\! \mathbf{F}_1\cdot (\beta_{\textrm{C}}\,\mathbf{I} + \eta\,\beta_{\textrm{I}}\mathbf{V}_2^{-1})\cdot  \mathbf{V}_1^{-1}$ and $\mathbf{U} \!=\! \beta_{\textrm{I}} \mathbf{F}_1\cdot \mathbf{V}_2^{-1}$. Now, assuming that $\lambda$ and $\mathbf{\nu} \!=\! [\nu_1, \nu_2]^T$ are, respectively,~the eigenvalue and eigenvector of $\mathbf{F \cdot V}^{-1}$ (i.e., $\mathbf{F \cdot V}^{-1} \cdot \mathbf{\nu} \!=\! \lambda \mathbf{\nu}$), we have $\kappa\, \mathbf{L} \nu_1 \!+\! \kappa\,\mathbf{U} \nu_2 \!=\! \lambda\,\nu_1$ and $\bar{\kappa}\, \mathbf{L}\, \nu_1 \!+\! \bar{\kappa}\,\mathbf{U}\nu_2 \!=\! \lambda\,\nu_2$. From this, we conclude that $\bar{\kappa}\,\nu_1 \!=\! \kappa\,\nu_2$ and therefore, $(\kappa\,\mathbf{L} \!+\! \bar{\kappa}\mathbf{U})\nu_1 \!=\! \lambda\,\nu_1$. The latter equality implies that the eigenvalues of $\mathbf{F\cdot V}^{-1}$ are the same as those of $\kappa\,\mathbf{L} \!+\! \bar{\kappa}\,\mathbf{U}$ and hence, it is enough to compute $\rho(\kappa\,\mathbf{L}  + \bar{\kappa}\,\mathbf{U})$ in order to derive $\rho(\mathbf{F\cdot V}^{-1})$.\vspace{-0.5em}

\section{Proof of Proposition~\ref{prop2}}
\label{app_D}
\fontdimen2\font=0.50ex
We first need to show that the discriminant $\Delta_{\mathbf{H}} \!=\! \Tr(\mathbf{H})^2 - \det(\mathbf{H})$, where $\mathbf{H} \!\triangleq\! \kappa \mathbf{L} + \bar{\kappa}\mathbf{U}$, is positive. After some mathematical manipulations, $\Tr(\mathbf{H})$ and $\det(\mathbf{H})$ can be written~as:\vspace{-0.2em}
\begin{equation}
	\!\begin{cases}
		\!\!\Tr(\mathbf{H}) \!=\! \tilde{R}_0^{(1)\!} \!+\! S^2\! \overbrace{\Big(\!\tilde{R}_0^{(2)\!} - \cfrac{\kappa\,p\,d_2\,\eta\,\gamma^2}{\eta' (\eta' \!+\! \gamma^{1} \!+\! \gamma^{2})}\,w\Big)}^{z_1}, \vspace{0.2em}   \\
		\!\!\det(\mathbf{H}) \!=\! S^1S^2\tilde{R}_0^{(1)\!}\! \underbrace{\Big(\!\cfrac{\kappa\,p\,d_2\,\eta}{\eta' \!+\! \gamma^{1} \!+\! \gamma^{2}}\,w \!+\! \cfrac{\bar{\kappa}\,p\,d_2\,\beta_{\textrm{I}}}{\delta \!+\! \gamma^{1,\textrm{I}} \!+\! \gamma^{2,\textrm{I}}}\Big)}_{z_2}.  
	\end{cases}
 \vspace{-0.2em}
\end{equation}
Undertaking the same approach as in Proposition~\ref{prop1},~we can~represent $\Tr(\mathbf{H})$ and $\det(\mathbf{H})$ as $\tilde{R}_0^{(1)} \!+\! S^2\, z_1$ and $S^1\,S^2\tilde{R}_0^{(2)}\,z_2$,~respectively. Similarly, to show that $\Delta_{\mathbf{H}} \!\geq\! 0$, we need to prove that $z_1 \!\geq\! z_2$. From \eqref{eq:R02} and \eqref{eq11}, we observe that $R_0^{(2)} \geq p\,d_2\,\eta\,w/\eta'$. Therefore,\vspace{-0.2em}
\begin{equation}
	z_1 \geq \kappa\, \cfrac{p\,d_2\,\eta}{\eta'}\,w + \bar{\kappa}\, \cfrac{p\,d_2\,\beta_{\textrm{I}}}{\delta(\delta + \gamma^{1,\textrm{I}} + \gamma^{2,\textrm{I}})} \left(\gamma^{1,\textrm{I}} + \delta\right).
	\label{eq21}
 \vspace{-0.2em}
\end{equation}
To prove $z_1 \geq z_2$, it suffices to show that the right-hand side of \eqref{eq21} is not lesser than $z_2$. The inequality can be simplified to $\eta' \!+\! \gamma^{1} \!+\! \gamma^{2} \geq \eta' + \gamma^{2}$, which is always true. Following the same procedure as in Appendix~\ref{app_B}, it is straightforward to prove the stability of the DFE for the three cases in terms of $\tilde{R}_0^{(1)}$ and $\tilde{R}_0^{(2)}$ thresholds.\vspace{-0.4em}

\section{Proof of Lemma~\ref{lemma3}}
\label{app_E}
\fontdimen2\font=0.50ex
In the network with activity heterogeneity, there are $4N$~compartments related to infection, with $\mathbf{Q}$ consisting of rows corresponding to $\{C^1_i\}_{i=1}^{N}$, $\{C^2_i\}_{i=1}^{N}$, $\{I^1_i\}_{i=1}^{N}$, and $\{I^2_i\}_{i=1}^{N}$. Additionally, $\mathbf{F}$ and $\mathbf{V} \!=\! \mathbf{V}^{-} \!-\! \mathbf{V}^+$ are computed as described in Section~\ref{sec:org_scir}. Recall that $\mathbf{F}_{i,j} \!=\! \partial \mathcal{F}_i/\partial x_j$,  $\mathbf{V}^+_{i,j} \!=\! \partial \mathcal{V}^+_i/\partial x_j$, and  $\mathbf{V}^-_{i,j} \!=\! \partial \mathcal{V}^-_i/\partial x_j$ are all evaluated at $\mathcal{E}_0$. For example, the $(i,j)$-th element of the first block of $\mathbf{F}$ is calculated as $\kappa \frac{\partial}{\partial C^1_j}\sum_{j\in N} a_{i,j} S_i^1 \beta_{\textrm{C}} C^1_j=\beta_{\textrm{C}} S_i^1 a_{i,j}$, and as a result, the first block can be expressed as $\beta_{\textrm{C}} \Diag\big((S_i^1)_{i=1}^N\big) \cdot \mathbf{A}$. At the DFE,~this becomes $\beta_{\textrm{C}} \Diag\big(\boldsymbol{\hat{\gamma}}^2/(\boldsymbol{\hat{\gamma}}^1 + \boldsymbol{\hat{\gamma}}^2)\big) \cdot \mathbf{A}$.\vspace{-0.4em}

%
\bibliographystyle{IEEEtran}
\bibliography{IEEEabrv,myref}

\vfill

\end{document}